\theoremstyle{plain}
\newtheorem{theorem}{Theorem}
\newtheorem{lemma}[theorem]{Lemma}
\newtheorem{corollary}[theorem]{Corollary}
\newtheorem{proposition}[theorem]{Proposition}
\theoremstyle{definition}
\newtheorem{definition}[theorem]{Definition}
\newtheorem{example}[theorem]{Example}
\theoremstyle{remark}
\newtheorem*{remark}{Remark}
\newcommand{\ML}{\mathsf{MATLANG}}
\newcommand{\inv}{\mathsf{inv}}
\newcommand{\eigen}{\mathsf{eigen}}
\newcommand{\transp}[1]{#1^*}
\newcommand{\SyntaxStyle}{\mathsf}
\newcommand{\LetIn}[3]{\SyntaxStyle{let}\ #1=#2\ \SyntaxStyle{in}\ #3}
\newcommand{\rowdom}{\mathbf{1}}
\DeclareMathOperator{\diag}{\mathsf{diag}}
\DeclareMathOperator{\Apply}{\mathsf{apply}}
\newcommand{\one}{\rowdom}
\newcommand{\dotmin}{\mathbin{\dot-}}
\newcommand{\M}{M}
\newcommand{\C}{\mathbf{C}}
\DeclareMathOperator{\var}{var}
\newcommand{\scm}{\mathcal{S}}
\newcommand{\bigstep}[3]{#2(#1)=#3}
\newcommand{\typed}[3]{#1\vdash#2:#3}
\DeclareMathOperator{\Sum}{\mathsf{sum}}
\newcommand{\Tb}{\mathbf{b}}
\newcommand{\Tn}{\mathbf{n}}
\newcommand{\dom}{\mathbf{dom}}
\newcommand{\Rel}{\mathit{Rel}}
\newcommand{\Mat}{\mathit{Mat}}
\newcommand{\Adj}{\mathit{Adj}}
\title{On the expressive power of query languages for matrices}
\date{}
\author[1]{Robert Brijder}
\author[2]{Floris Geerts}
\author[1]{Jan Van den Bussche}
\author[1]{Timmy Weerwag}
\affil[1]{Universiteit Hasselt}
\affil[2]{Universiteit Antwerpen}
\begin{document}

\maketitle

\begin{abstract}

We investigate the expressive power of $\ML$, a formal language
for matrix manipulation based on common matrix operations and
linear algebra.  The language can be extended with the operation $\inv$
of inverting a matrix.  In $\ML+\inv$ we can compute the
transitive closure of directed graphs, whereas we show that this
is not possible without inversion.  Indeed we show that the basic
language can be
simulated in the relational algebra with arithmetic operations,
grouping, and summation.  We also consider an operation $\eigen$
for diagonalizing a matrix, which is defined so that different
eigenvectors returned for a same eigenvalue are orthogonal.  We
show that $\inv$ can be expressed in $\ML+\eigen$.  We put forward
the open question whether there are boolean queries about
matrices, or generic queries about graphs, expressible in $\ML +
\eigen$ but not in $\ML+\inv$.  The evaluation problem for $\ML +
\eigen$ is shown to be complete for the complexity class $\exists \mathbf{R}$.

\end{abstract}

\section{Introduction}

Data scientists often use matrices to represent their data, as
opposed to using the relational data model.  These matrices are
then manipulated in programming languages such as R or
\textsf{MATLAB}.  These languages have common operations on
matrices built-in, notably matrix multiplication; matrix
transposition; elementwise operations on the entries of matrices;
solving nonsingular systems of
linear equations (matrix inversion); and diagonalization
(eigenvalues and eigenvectors).  Providing database support for
matrices and multidimensional arrays has been a long-standing
research topic \cite{rusu_survey}, originally geared towards
applications in scientific data management, and more recently
motivated by machine learning over big data
\cite{systemml,olteanu_regression,naughton_la,ngolteanu_learning}.

Database theory and finite model theory provide a rich picture of
the expressive power of query languages
\cite{ahv_book,kolaitis_expressivepower}.  In this paper we would
like to bring matrix languages into this picture.  There is a lot
of current interest in languages that combine matrix operations
with relational query languages or logics, both in database
systems \cite{hutchison} and in finite model theory
\cite{dawar_linearalgebra,dghl_rank,holm_phd}.  In the present
study, however, we focus on matrices alone.  Indeed, given their
popularity, we believe the expressive power of matrix
sublanguages also deserves to be understood in its own right.

The contents of this paper can be introduced as follows.  We
begin the paper by defining the language $\ML$ as an
analog for matrices of the relational algebra for relations.
This language is based on five elementary operations, namely,
the one-vector; turning a vector in a diagonal matrix; matrix
multiplication; matrix transposition; and pointwise function application.
We give examples showing that this basic language is capable
of expressing common matrix manipulations.  For example, the
Google matrix of any directed graph $G$ can be computed in $\ML$,
starting from the adjacency matrix of $G$.

Well-typedness and well-definedness notions of $\ML$ expressions are
captured via a simple data model for matrices. In analogy to
the relational model, a schema consists of a number of matrix
names, and an instance assigns matrices to the names.  Recall
that in a relational schema, a relation name is typed by a set of
attribute symbols.  In our case, a matrix name is typed by a pair
$\alpha \times \beta$, where $\alpha$ and $\beta$ are size symbols
that indicate, in a generic manner,
the number of rows and columns of the matrix.  

In Section~\ref{secsum} we show that our language can be simulated in
the relational algebra with aggregates
\cite{klug_agg,libkin_sql}, using a standard representation of
matrices as relations.  The only aggregate function that is
needed is summation.  In fact, $\ML$ is already subsumed by
aggregate logic with only three nonnumerical variables.
Conversely, $\ML$ can express all queries from graph databases
(binary relational structures) to binary relations that can be
expressed in first-order logic with three variables.  In
contrast, the four-variable query asking if the graph contains a
four-clique, is not expressible.

In Section~\ref{secinv} we extend $\ML$ with an operation for
inverting a matrix, and we show that the extended language
is strictly more expressive.  Indeed, the transitive
closure of binary relations becomes expressible.  The possibility
of reducing transitive closure to matrix inversion has been
pointed out by several researchers
\cite{laubner_phd,schwentick_reachdynfo,sato_ladatalog}.  We show
that the restricted setting of $\ML$ suffices for this reduction
to work.  That transitive closure is not expressible without
inversion, follows from the locality of relational algebra with
aggregates \cite{libkin_sql}.

Another prominent operation of linear algebra, with many
applications in data mining and graph analysis
\cite{hms_miningbook,ullman_mining}, is to return eigenvectors
and eigenvalues.  There are various ways to define this operator
formally. In Section~\ref{seceigen} we define the operation
$\eigen$ to return a basis of eigenvectors, in which eigenvectors
for a same eigenvalue are orthogonal.  We show that the
resulting language $\ML + \eigen$ can express inversion.
The argument is well known from linear algebra, but our result
shows that it can be carried out in $\ML$, once more attesting
that we have defined an adequate matrix language.  It is natural
to conjecture that $\ML + \eigen$ is actually strictly more
powerful than $\ML + \inv$ in expressing, say, boolean queries
about matrices. Proving this is an interesting open problem.

Finally, in Section~\ref{seceval} we look into the evaluation
problem for $\ML + \eigen$ expressions.  In practice, matrix
computations are performed using techniques from numerical
mathematics \cite{golub}.  It remains of foundational interest,
however, to know whether the evaluation of expressions is
effectively computable.  We need to define this problem with some
care, since we work with arbitrary complex numbers. Even if the
inputs are, say, 0-1 matrices, the outputs of the $\eigen$ operation
can be complex numbers.  Moreover, until now we have allowed
arbitrary pointwise functions, which we should restrict somehow
if we want to discuss computability.  Our approach is to restrict
pointwise functions to be semi-algebraic, i.e., definable over
the
real numbers.  We will observe that the input-output relation of
an expression $e$, applied to input matrices of given dimensions,
is definable in the existential theory of the real numbers, by a
formula of size polynomial in the size of $e$ and the given
dimensions.   This places natural decision versions of the
evaluation problem for $\ML + \eigen$ in the complexity class
$\exists \mathbf{R}$ (combined complexity).  We show moreover
that there exists a fixed expression (data complexity) for which
the evaluation problem is $\exists \mathbf{R}$-complete, even
restricted to input matrices with integer entries.  It also
follows that equivalence of expressions, over inputs of given
dimensions, is decidable.

\section{$\ML$}

We assume a sufficient supply of \emph{matrix variables}, which
serve to indicate the inputs to expressions in $\ML$. Variables
can also be introduced in let-constructs inside expressions.
The syntax of $\ML$ expressions is defined by the grammar:
\begin{align*}
  e &::= \M && \text{(matrix variable)} \\
    &\mid\quad \LetIn{\M}{e_1}{e_2} && \text{(local binding)} \\
    &\mid\quad \transp{e} && \text{(conjugate transpose)} \\
    &\mid\quad \rowdom(e) && \text{(one-vector)} \\
    &\mid\quad \diag(e) && \text{(diagonalization of a vector)} \\
    &\mid\quad e_1 \cdot e_2 && \text{(matrix multiplication)} \\
    &\mid\quad \Apply[f](e_1, \ldots, e_n) && \text{(pointwise
    application, $f \in \Omega$)}
\end{align*}
In the last rule, $f$ is the name of a function $f : \C^n \to
\C$, where $\C$ denotes the complex numbers.  Formally, the
syntax of $\ML$ is parameterized by a repertoire $\Omega$
of such functions, but for simplicity we will not reflect this in
the notation.

\begin{example} \label{exex}
Let $c \in \C$ be a constant; we also use $c$ as a name for
the constant function $c : \C \to \C : z \mapsto
c$.  Then $$ \LetIn{N}{\transp{\rowdom(M)}}{\Apply[c](\rowdom(N))}
$$ is an example of an expression.  At this point, this is a
purely syntactical example; we will see its semantics shortly.
The expression is actually equivalent to
$\Apply[c](\rowdom(\transp{\rowdom(M)}))$.  The
let-construct is useful to give names to intermediate results,
but is not essential for now.  It will become essential
later, when we enrich $\ML$ with the $\eigen$ operation.
\qed
\end{example}

In defining the semantics of the language, we begin by defining
the basic matrix operations.  Following practical matrix sublanguages
such as R or \textsf{MATLAB}, we will work throughout
with matrices over the complex numbers.  However, a real-number version
of the language could be defined as well.
\begin{description}
\item[Transpose:] If $A$ is a matrix then $\transp A$ is its
conjugate transpose.  So, if $A$ is an $m \times n$ matrix then
$\transp A$ is an $n \times m$ matrix and the entry $A^*_{i,j}$ is the
complex conjugate of the entry $A_{j,i}$.
\item[One-vector:] If $A$ is an $m \times n$ matrix then
$\one(A)$ is the $m \times 1$ column vector consisting of all
ones.
\item[Diag:] If $v$ is an $m \times 1$ column vector then
$\diag(v)$
is the $m \times m$ diagonal square matrix with $v$ on the diagonal and
zero everywhere else.
\item[Matrix multiplication:] If
$A$ is an $m \times n$ matrix and $B$ is an $n \times p$ matrix
then the well known matrix multiplication $A B$
is defined to be the $m \times p$ matrix
where $(AB)_{i,j} = \sum_{k=1}^n A_{i,k}B_{k,j}$.  In $\ML$ we
explicitly denote this as $A \cdot B$.
\item[Pointwise application:] If $A^{(1)},\dots,A^{(n)}$ are
matrices of the same dimensions $m \times p$, then
$\Apply[f](A^{(1)},\dots,A^{(n)})$ is the $m \times p$ matrix $C$
where $C_{i,j} = f(A^{(1)}_{i,j},\dots,A^{(n)}_{i,j})$.
\end{description}

\begin{figure}
$$
\begin{array}{l@{\hspace{2em}}l}
\displaystyle
\begin{pmatrix} 0 & 1+i \\ 2 & 3-i \\ 4 +4i & 5
\end{pmatrix}^* =
\begin{pmatrix} 0 & 2 & 4-4i  \\ 1-i & 3+i & 5 \end{pmatrix}
&
\displaystyle
\one\begin{pmatrix} 2 & 3 & 4 \\ 4 & 5 & 6 \end{pmatrix} =
\begin{pmatrix} 1 \\ 1 \end{pmatrix}
\\[4.5ex]
\begin{pmatrix} 1 & 2 \\ 3 & 4 \\ 5 &
6 \end{pmatrix} \cdot \begin{pmatrix} 6 & 5 & 4 & 3 \\ 2 & 1 & 0
  & -1 \end{pmatrix} = \begin{pmatrix} 10
   & 7 & 4 & 1 \\ 26 & 19 & 12 & 5 \\ 42 & 31 & 20 & 9
   \end{pmatrix}
&
\displaystyle
\diag \begin{pmatrix} 6 \\ 7 \end{pmatrix} =
\begin{pmatrix} 6 & 0 \\ 0 & 7 \end{pmatrix}
\end{array} $$
$$
\Apply[\dotmin](\begin{pmatrix} 1 & 1 & 1 \\ 0 & 1 & 1 \\ 0 &
0 & 0 \end{pmatrix},
\begin{pmatrix} 0 & 0 & 1 \\ 0 & 1 & 0 \\ 1 & 0 & 1
\end{pmatrix}) =
\begin{pmatrix} 1 & 1 & 0 \\ 0 & 0 & 1 \\ 0 & 0 & 0 \end{pmatrix}
$$
\caption{Basic matrix operations of $\ML$.
The matrix multiplication example is taken
from Axler's book \cite{ladoneright}.}
\label{figops}
\end{figure}

\begin{example}
The operations are illustrated in Figure~\ref{figops}.
In the pointwise application example,
we use the function $\dotmin$ defined by $x \dotmin y
= x - y$ if $x$ and $y$ are both real numbers and $x \geq y$, and
$x \dotmin y = 0$ otherwise.
\end{example}

\subsection{Formal semantics}

The formal semantics of expressions is defined in a
straightforward manner, as shown in Figure~\ref{fig:bigstep}.  An
\emph{instance} $I$ is a function, defined on a nonempty finite set
$\var(I)$ of matrix variables, that assigns a matrix to each
element of $\var(I)$.  Figure~\ref{fig:bigstep} provides the
rules that allow to derive that an expression $e$, on an
instance $I$, successfully evaluates to a matrix $A$.  We denote
this success by $\bigstep IeA$.  The reason why an evaluation may
not succeed can be found in the rules that have a condition
attached to them. The rule for variables fails when an instance
simply does not provide a value for some input variable.  The
rules for $\diag$, $\Apply$, and matrix
multiplication have conditions on the dimensions of matrices,
that need to be satisfied for the operations to be well-defined.

\begin{example}[Scalars]
\label{exscalar}
The expression from
Example~\ref{exex}, regardless of the matrix assigned to $M$, 
evaluates to the $1 \times 1$ matrix whose single entry equals
$c$.  We introduce the shorthand $c$ for this constant
expression.  Obviously, in practice, scalars would be
built in the language and would not be computed in such a roundabout manner.
In this paper, however, we are interested in expressiveness, so
we start from a minimal language and
then see what is already expressible in this language.
\end{example}

\begin{figure}
  \begin{mathpar}
    \infer{\M \in \var(I)}{\bigstep{I}{\M}{I(\M)}} \and
    \infer{\bigstep{I}{e_1}{A} \\ \bigstep{I[\M:=A]}{e_2}{B}}
    {\bigstep{I}{(\LetIn{\M}{e_1}{e_2})}{B}} \and
    \infer{\bigstep{I}{e}{A}}{\bigstep{I}{\transp{e}}{\transp{A}}} \and
    \infer{\bigstep{I}{e}{A}}{\bigstep{I}{\rowdom(e)}{\rowdom(A)}} \and
    \infer{\bigstep{I}{e}{A} \\ \text{$A$ is a column vector}}
    {\bigstep{I}{\diag(e)}{\diag(A)}} \and
    \infer{
      \bigstep{I}{e_1}{A} \\ \bigstep{I}{e_2}{B} \\
    \text{number of columns of $A$ equals the number of rows of $B$}
      }
    {\bigstep{I}{e_1 \cdot e_2}{A \cdot B}} \and
    \infer{\forall k = 1, \ldots, n : (\bigstep{I}{e_k}{A_k}) \\
      \text{all $A_k$ have the same dimensions}}
    {\bigstep{I}{\Apply[f](e_1, \ldots, e_n)}{\Apply[f](A_1, \ldots, A_n)}}
  \end{mathpar}
  \caption{Big-step operational semantics of $\ML$.
The notation $I[\M := A]$ denotes the instance that is equal to $I$,
except that $\M$ is mapped to the matrix $A$.}
  \label{fig:bigstep}
\end{figure}

\begin{example}[Scalar multiplication]
\label{exscalmul}
Let $A$ be any matrix and let $C$ be a $1 \times 1$ matrix; let $c$ be
the value of $C$'s single entry.  Viewing $C$ as a scalar, we define
the operation $C\odot A$ as multiplying every entry of $A$ by
$c$.  We can express $C \odot A$ as
$$ \LetIn{M}{\rowdom(A) \cdot C \cdot \transp{\rowdom(\transp{A})}}
{\Apply[\times](M,A)}. $$
If $A$ is an $m \times n$ matrix, we compute in variable $M$ the $m
\times n$ matrix where every entry equals $c$.  Then pointwise
multiplication is used to do the scalar multiplication.
\end{example}

\begin{example}[Google matrix] \label{ex:google-matrix}
Let $A$ be the adjacency matrix of a directed graph (modeling the Web graph)
on $n$ nodes numbered $1,\dots,n$.
Let $0 < d < 1$ be a fixed ``damping factor''.  Let $k_i$ denote the
outdegree of node $i$.  For simplicity, we assume $k_i$ is
nonzero for every $i$.  Then the Google matrix
\cite{pagerank,bonato_webgraph}
of $A$ is the $n \times n$ matrix $G$ defined by $$ G_{i,j} =
d \frac{A_{ij}}{k_i} + \frac{1-d}{n}. $$  The calculation
of $G$ from $A$ can be expressed in $\ML$ as follows:
\begin{tabbing}
\sffamily let $J = \one(A) \cdot \one(A)^*$ in \\
\sffamily let $K = A \cdot J$ in \\
\sffamily let $B = \Apply[/](A,K)$ in \\
\sffamily let $N = \one(A)^* \cdot \one(A)$ in \\
$\Apply[+]( d \odot B, (1-d) \odot \Apply[1/x](N) \odot J)$
\end{tabbing}
In variable $J$ we compute the $n \times n$ matrix where every
entry equals one.  In $K$ we compute the $n \times n$ matrix
where all entries in the $i$th row equal $k_i$.  In $N$ we
compute the $1 \times 1$ matrix containing the value $n$.  The
pointwise functions applied are addition, division, and reciprocal.
We use the shorthand for constants ($d$ and $1-d$) from
Example~\ref{exscalar}, and the shorthand $\odot$ for scalar
multiplication from Example~\ref{exscalmul}.
\end{example}

\begin{example}[Minimum of a vector] \label{exmin}
Let $v=(v_1,\dots,v_n)^*$
be a column vector of real numbers; we would like to
extract the minimum from $v$.  This can be done as follows:
\begin{tabbing}
\sffamily let $V = v \cdot \one(v)^*$ in \\
\sffamily let $C = \Apply[\leq](V,V^*) \cdot \one(v)$ in \\
\sffamily let $N = \one(v)^* \cdot \one(v)$ in \\
\sffamily let $S = \Apply[=](C,\one(v) \cdot N)$ in \\
\sffamily let $M = \Apply[1/x](S^* \cdot \one(v))$ in \\
$M \cdot v^* \cdot S$
\end{tabbing}
The pointwise functions applied are $\leq$, which returns 1
on $(x,y)$ if $x \leq y$ and $0$ otherwise;
$=$, defined analogously; and the reciprocal function.
In variable $V$ we compute a square matrix holding $n$ copies of
$v$.  Then in variable $C$
we compute the $n \times 1$ column vector where $C_i$
counts the number of $v_j$ such that $v_i \leq v_j$. If $C_i=n$
then $v_i$ equals the minimum.  Variable $N$ computes the scalar
$n$ and column vector $S$ is a selector where $S_i=1$ if $v_i$
equals the minimum, and $S_i=0$ otherwise.  Since the minimum may
appear multiple times in $v$, we compute in $M$ the inverse of
the multiplicity.  Finally we sum the different
occurrences of the minimum in $v$ and divide by the multiplicity.
\end{example}

\subsection{Types and schemas} \label{sectypes}

We have already remarked that, due to conditions on the
dimensions of matrices, $\ML$ expressions are not well-defined on
all instances.  For example, if $I$ is an instance where $I(M)$
is a $3 \times 4$ matrix and $I(N)$ is a $2 \times 4$ matrix,
then the expression $M \cdot N$ is not defined on $I$.  The
expression $M \cdot N^*$, however, is well-defined on $I$.  We
now introduce a notion of schema, which assigns types to matrix
names, so that expressions can be type-checked against schemas.

Our types need to be able to guarantee equalities between numbers
of rows or numbers of columns, so that $\Apply$ and matrix
multiplication can be typechecked.  Our types also need to be
able to recognize vectors, so that $\diag$ can be typechecked.

Formally, we assume a sufficient supply of \emph{size symbols},
which we will denote by the letters $\alpha$, $\beta$, $\gamma$.
A size symbol represents the number of rows or columns of a
matrix.  Together with an explicit 1, we can indicate
arbitrary matrices as $\alpha \times \beta$, square matrices as
$\alpha \times \alpha$, column vectors as $\alpha \times 1$, row
vectors as $1 \times \alpha$, and scalars as $1 \times 1$.
Formally, a \emph{size term} is either a size symbol or an
explicit 1.  A \emph{type} is then an expression of the form $s_1
\times s_2$ where $s_1$ and $s_2$ are size terms.  Finally, a
\emph{schema} $\scm$ is a function, defined on a nonempty finite
set $\var(\scm)$ of matrix variables, that assigns a type to each
element of $\var(\scm)$.

The typechecking of expressions is now shown in
Figure~\ref{fig:matlang-type-rules}.  The figure provides the rules
that allow to infer an output type $\tau$ for an expression $e$
over a schema $\scm$.  To indicate that a type can be
successfully inferred, we use the notation
$\typed{\scm}{e}{\tau}$.  When we cannot infer a type, we say $e$
is not well-typed over $\scm$.  For example, when
$\scm(M)=\alpha \times \beta$ and $\scm(N) = \gamma \times
\beta$, then the expression $M \cdot N$ is not well-typed over
$\scm$.  The expression $M \cdot N^*$, however, is well-typed
with output type $\alpha \times \gamma$.

To establish the soundness of the type system,
we need a notion of conformance of an instance to a schema.

Formally, a \emph{size assignment} $\sigma$ is a function
from size symbols to positive natural numbers.  We extend
$\sigma$ to any size term by setting $\sigma(1) = 1$.  Now, let
$\scm$ be a schema and $I$ an instance with $\var(I) =
\var(\scm)$.  We say that $I$ is an instance of $\scm$
if there is a size assignment $\sigma$ such
that for all $\M \in \var(\scm)$, if $\scm(\M) = s_1 \times s_2$,
then $I(\M)$ is a $\sigma(s_1) \times \sigma(s_2)$ matrix.  In
that case we also say that $I$
\emph{conforms} to $\scm$ by the size assignment $\sigma$.

We now obtain the following obvious but desirable property.
\begin{proposition}[Safety]
  \label{prop:matlang-safety}
If $\scm \vdash e : s_1 \times s_2$, then for every instance $I$
conforming to $\scm$, by size assignment $\sigma$, the
matrix $e(I)$ is well-defined and has dimensions
$\sigma(s_1) \times \sigma(s_2)$.
\end{proposition}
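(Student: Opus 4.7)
The plan is to prove the Safety proposition by structural induction on the expression $e$, simultaneously for all schemas $\scm$, all instances $I$ conforming to $\scm$, and all size assignments $\sigma$ witnessing the conformance. At each step, one reads off from the type-derivation rule of Figure~\ref{fig:matlang-type-rules} what the types of the immediate subexpressions must be, applies the induction hypothesis to conclude that those subexpressions evaluate successfully to matrices of the predicted dimensions, and then checks that the side condition of the matching operational rule in Figure~\ref{fig:bigstep} is satisfied, so that the construct as a whole evaluates successfully to a matrix of the claimed dimensions.

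The base case $e = M$ is immediate: the typing rule forces $\scm(M) = s_1 \times s_2$, and by definition of conformance $I(M)$ is a $\sigma(s_1) \times \sigma(s_2)$ matrix. The cases $\transp{e}$, $\rowdom(e)$, and $\diag(e)$ are purely mechanical: the type rule records how the output type is obtained from the input type (swapping, replacing the second component by $1$, or, for $\diag$, requiring input type $s \times 1$ and giving output type $s \times s$), and since $\sigma(1) = 1$ the dimensions produced by the corresponding semantic operation coincide with the predicted ones. For matrix multiplication $e_1 \cdot e_2$, the type rule forces the column-size term of $e_1$ and the row-size term of $e_2$ to be the same size term $s$; applying $\sigma$ yields equal numbers, so the multiplicative side condition of Figure~\ref{fig:bigstep} holds, and the output has dimensions as claimed. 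For $\Apply[f](e_1,\dots,e_n)$, the type rule forces all $e_k$ to have a common type, hence by the induction hypothesis all $A_k$ have the same dimensions, satisfying the pointwise-application side condition.

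The step that needs a little more care is the $\LetIn{M}{e_1}{e_2}$ construct, because here we must extend both the schema and the instance. By the type rule, there is some type $\tau_1$ with $\scm \vdash e_1 : \tau_1$ and $\scm[M := \tau_1] \vdash e_2 : s_1 \times s_2$. The induction hypothesis applied to $e_1$ yields $\bigstep{I}{e_1}{A}$ with $A$ of dimensions $\sigma(\tau_1)$. The extended instance $I[M := A]$ then conforms to the extended schema $\scm[M := \tau_1]$ via the \emph{same} size assignment $\sigma$, since all previously bound variables keep their matrices and the new binding has matching dimensions by construction. Invoking the induction hypothesis on $e_2$ with this extended pair finishes the case.

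I do not expect any real obstacle: the proposition is an entirely syntactic agreement between the type system of Figure~\ref{fig:matlang-type-rules} and the operational semantics of Figure~\ref{fig:bigstep}, and the only mildly delicate point is the bookkeeping in the let-case, which is handled by observing that the size assignment $\sigma$ need not be modified when a new variable is bound to a matrix whose dimensions are already dictated by $\sigma$ applied to its inferred type.
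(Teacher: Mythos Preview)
Your proposal is correct; the structural induction you outline is exactly the standard argument, and each case is handled properly, including the bookkeeping for the let-construct. Note that the paper itself does not prove this proposition at all---it merely introduces it as ``the following obvious but desirable property''---so your write-up supplies the details the authors chose to omit.
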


\begin{figure}
  \begin{mathpar}
    \infer{\M \in \var(\scm)}
    {\typed{\scm}{\M}{\scm(\M)}} \and
    \infer{\typed{\scm}{e_1}{\tau_1} \\ \typed{\scm[\M := \tau_1]}{e_2}{\tau_2}}
    {\typed{\scm}{\LetIn{\M}{e_1}{e_2}}{\tau_2}} \and
    \infer{\typed{\scm}{e}{s_1 \times s_2}}
    {\typed{\scm}{\transp{e}}{s_2 \times s_1}} \and
    \infer{\typed{\scm}{e}{s_1 \times s_2}}
    {\typed{\scm}{\rowdom(e)}{s_1 \times 1}} \and
    \infer{\typed{\scm}{e}{s \times 1}}
    {\typed{\scm}{\diag(e)}{s \times s}} \and
    \infer{\typed{\scm}{e_1}{s_1 \times s_2} \\ \typed{\scm}{e_2}{s_2 \times s_3}}
    {\typed{\scm}{e_1 \cdot e_2}{s_1 \times s_3}} \and
    \infer{n > 0 \\ f : \C^n\to \C \\ \forall k = 1, \ldots, n : (\typed{\scm}{e_k}{\tau})}
    {\typed{\scm}{\Apply[f](e_1, \ldots, e_n)}{\tau}}
  \end{mathpar}
  \caption{Typechecking $\ML$.
The notation $\scm[\M := \tau]$ denotes the schema that is equal to $\scm$,
except that $\M$ is mapped to the type $\tau$.}
  \label{fig:matlang-type-rules}
\end{figure}

\section{Expressive power of $\ML$} \label{secsum}

It is natural to represent an $m \times n$ matrix $A$
by a ternary relation $$
\Rel_2(A) := \{(i,j,A_{i,j}) \mid i \in \{1,\dots,m\}, \ j \in
\{1,\dots,n\}\}. $$  In the special case where $A$ is an
$m \times 1$ matrix (column
vector), $A$ can also be represented by a binary relation
$\Rel_1(A) :=
\{(i,A_{i,1}) \mid i \in \{1,\dots,m\}\}$.  Similarly, a $1
\times n$ matrix (row vector) $A$ can be represented by $\Rel_1(A)
:= \{(j,A_{1,j}) \mid j \in \{1,\dots,n\}\}$.  Finally, a $1
\times 1$ matrix (scalar) $A$ can be represented by the unary
singleton relation $\Rel_0(A) := \{(A_{1,1})\}$.

Note that in $\ML$, we perform calculations on matrix entries,
but not on row or column indices.  This fits well to the
relational model with aggregates as formalized by Libkin
\cite{libkin_sql}.  In this model, the columns of relations are
typed as ``base'', indicated by $\Tb$, or ``numerical'',
indicated by $\Tn$.  In the relational representations of
matrices presented above, the last column is of type $\Tn$ and
the other columns (if any) are of type $\Tb$.  In particular, in
our setting, numerical columns hold complex numbers.

Given this representation of matrices by relations, $\ML$
can be simulated in the relational
algebra with aggregates.  Actually, the only aggregate operation
we need is summation.  We will not reproduce the formal definition of the
relational algebra with summation \cite{libkin_sql},
but note the following salient points:
\begin{itemize}
\item
Expressions are built up from relation names using the classical operations
union, set difference, cartesian product
($\times$), selection ($\sigma$), and projection ($\pi$), plus
two new operations: \emph{function application} and
\emph{summation}.
\item
For selection, we only use
equality and nonequality comparisons on base columns.  No
selection on numerical columns will be needed in our setting.
\item
For any function $f:\C^n \to \C$, the
operation $\Apply[f;i_1,\dots,i_n]$ can be applied to any
relation $r$ having columns $i_1$, \dots, $i_n$, which
must be numerical.  The result
is the relation $\{(t,f(t({i_1}),\dots,t({i_n}))) \mid t \in
r\}$, adding a numerical column to $r$.
We allow $n=0$, in which case $f$ is a constant.
\item
\newcommand{\group}{\mathsf{group}}
The operation $\Sum[i;i_1,\dots,i_n]$ can be applied to any
relation $r$ having columns $i$, $i_1$, \dots, $i_n$, where
column $i$ must be numerical.  In our setting we only need the operation in
cases where columns $i_1$, \dots, $i_n$ are base columns.
The result of the operation is the relation $$
\{(t(i_1),\dots,t(i_n),\sum_{t' \in \group[i_1,\dots,i_n](r,t)}
t'(i)) \mid t \in r\}, $$ where
$$ \group[i_1,\dots,i_n](r,t) = \{t' \in r \mid t'(i_1)=t(i_1)
\land \cdots \land t'(i_n)=t(i_n)\}. $$
Again, $n$ can be zero, in which case the result is a singleton.
\end{itemize}

\subsection{From $\ML$ to relational algebra with summation}

To state the translation formally,
we assume a supply of \emph{relation variables}, which, for
convenience, we can take to be the same as the matrix variables.
A \emph{relation type} is a tuple of $\Tb$'s and $\Tn$'s.
A \emph{relational schema} $\scm$ is a function, defined on a
nonempty finite set
$\var(\scm)$ of relation variables, that assigns a relation type
to each element of $\var(\scm)$.

One can define well-typedness for expressions in the relation
algebra with summation, and define the output type.  We omit this
definition here, as it follows a well-known methodology
\cite{crash} and is analogous to what we have already done for
$\ML$ in Section~\ref{sectypes}.

To define relational instances, we assume a countably infinite universe
$\dom$ of abstract atomic data elements.  It is convenient to
assume that the natural numbers are contained in $\dom$.  We
stress that this assumption is not essential but simplifies the
presentation.  Alternatively, we would have to work with explicit
embeddings from the natural numbers into $\dom$.

Let $\tau$ be a relation type. A \emph{tuple
of type $\tau$} is a tuple $(t(1),\dots,t(n))$ of the same arity
as $\tau$, such that $t(i) \in \dom$ when $\tau(i) = \Tb$, and
$t(i)$ is a complex number when $\tau(i) = \Tn$.
A \emph{relation of type
$\tau$} is a finite set of tuples of type $\tau$.  
An \emph{instance} of a relational schema $\scm$ is a
function $I$ defined on $\var(\scm)$ so that $I(R)$ is a relation
of type $\scm(R)$ for every $R \in \var(\scm)$.

We must connect the matrix data model to the relational data
model.  Let $\tau = s_1\times s_2$ be a matrix type.  Let us call $\tau$ a
\emph{general type} if $s_1$ and $s_2$ are both size symbols; a
\emph{vector type} if $s_1$ is a size symbol and $s_2$ is 1, or
vice versa; and the \emph{scalar type} if $\tau$ is $1\times 1$.
To every matrix type $\tau$ we associate a relation type
$$ \Rel(\tau) := \begin{cases}
(\Tb,\Tb,\Tn) & \text{if $\tau$ is general;} \\
(\Tb,\Tn) & \text{if $\tau$ is a vector type;} \\
(\Tn) & \text{if $\tau$ is scalar.} \end{cases} $$
Then to every matrix schema $\scm$ we associate the relational
schema $\Rel(\scm)$ where $\Rel(\scm)(M) = \Rel(\scm(M))$ for
every $M \in \var(\scm)$.  For each instance $I$ of
$\scm$, we define the instance $\Rel(I)$ over
$\Rel(\scm)$ by $$ \Rel(I)(M) = \begin{cases} 
\Rel_2(I(M)) & \text{if $\scm(M)$ is a general type;} \\
\Rel_1(I(M)) & \text{if $\scm(M)$ is a vector type;} \\
\Rel_0(I(M)) & \text{if $\scm(M)$ is the scalar type.}
\end{cases} $$  Here we use the relational representations
$\Rel_2$, $\Rel_1$ and $\Rel_0$ of matrices introduced in the
beginning of Section~\ref{secsum}.

\begin{theorem} \label{sumtheorem}
Let $\scm$ be a matrix schema, and let $e$ a $\ML$ expression
that is well-typed over $\scm$ with output type $\tau$.
Let $\ell=2$, $1$,
or $0$, depending on whether $\tau$ is general, a vector type, or
scalar, respectively.
\begin{enumerate}
\item
There
exists an expression $\Rel(e)$ in the relational algebra with
summation, that is well-typed over $\Rel(\scm)$ with output type
$\Rel(\tau)$, such that for every instance $I$ of $\scm$, we have
$\Rel_\ell(e(I)) = \Rel(e)(\Rel(I))$.
\item
The expression $\Rel(e)$ uses
neither set difference, nor selection conditions on numerical
columns.
\item
The only functions used in $\Rel(e)$ are those used in
pointwise applications in $e$; complex conjugation;
multiplication of two numbers;
and the constant functions $0$ and $1$.
\end{enumerate}
\end{theorem}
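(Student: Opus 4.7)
The approach is a structural induction on the expression $e$. For each syntactic form, I define $\Rel(e)$ compositionally from the translations of its immediate subexpressions and prove the three claims simultaneously: the semantic identity $\Rel_\ell(e(I))=\Rel(e)(\Rel(I))$, the output relation type, and the restrictions on the operations and functions that appear.

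The easy cases are mechanical. For a matrix variable $M$, set $\Rel(M) := M$. For $\LetIn{M}{e_1}{e_2}$, textually substitute $\Rel(e_1)$ for every occurrence of $M$ in $\Rel(e_2)$; this is sound because $\ML$ is purely functional. For $\transp{e}$ of general type, complex-conjugate the numerical column via $\Apply$ with the conjugation function and swap the two base columns with a reordering projection; for vector or scalar type only the conjugation is needed, since $\Rel_1$ and $\Rel_0$ collapse the distinction between a shape and its transpose. For $\rowdom(e)$ of general type, project onto the first base column and then $\Apply$ the constant function $1$; for vector and scalar inputs the output similarly consists of a column of $1$'s or a single $1$-tuple. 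For $\Apply[f](e_1,\dots,e_n)$, take the cartesian product of the $\Rel(e_i)$, select equality on the corresponding base columns, $\Apply$ the function $f$ to the $n$ numerical columns, and project to the intended arity.

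The two nontrivial cases are $\diag$ and matrix multiplication. For $\diag(e)$ with $\Rel(e)=\{(i,v_i)\}$, the output must be the full $m\times m$ relation including off-diagonal zeros, and set difference is forbidden by claim~2. I build it as the union of a ``zero background'' $\{(i,j,0)\}$, obtained from $\pi_1(\Rel(e))\times\pi_1(\Rel(e))$ by applying the constant $0$, together with a ``diagonal'' $\{(i,i,v_i)\}$, obtained from $\Rel(e)$ by a self-cartesian product, an equality selection on the two base columns, and a reordering projection; a final $\Sum$ grouping on the two base columns collapses the duplicates on the diagonal into $v_i+0=v_i$. For $e_1\cdot e_2$ I use the standard ``join, multiply, group-and-sum'' pattern: cartesian product of the two operand relations, equality selection on the shared inner index, $\Apply$ of binary multiplication to the two numerical columns, a projection, and a final $\Sum$ summing over the inner index while grouping by the outer indices.

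The main obstacle is bookkeeping: both $\Apply$ and matrix multiplication must be treated separately for every combination of general, vector, and scalar operand types, because the relation arities of $(\Tb,\Tb,\Tn)$, $(\Tb,\Tn)$, and $(\Tn)$ differ. Once the translations are in place, claim~2 is immediate because I never use set difference and every selection tests (non)equality of two base columns, and claim~3 follows because the only pointwise functions introduced beyond those already occurring in $e$ are complex conjugation, binary multiplication, and the constants $0$ and $1$.
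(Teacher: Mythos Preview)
Your proposal is correct and follows essentially the same approach as the paper: a structural induction on $e$, with case splits according to whether the operand types are general, vector, or scalar, and the standard join--multiply--$\Sum$ translation for matrix multiplication. The one minor technical difference is in the $\diag$ case: the paper forms a \emph{disjoint} union using $\sigma_{\$1=\$2}$ and $\sigma_{\$1\neq\$2}$ (nonequality on base columns is permitted by claim~2), whereas you overlay a full zero background with the diagonal and then use $\Sum$ to collapse the overlap---both are valid, and your variant still works in the edge case $v_i=0$ since set union then contributes a single tuple.
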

\begin{proof}
We only give a few representative examples.
\begin{itemize}
\item
If $M$ is of type $\alpha \times \beta$ then $\Rel(M^*)$ is
$\Apply[\overline z;3] \, \pi_{2,1,3}(M)$, where $\overline z$ is
the complex conjugate. If $M$ is of type $\alpha \times 1$, however,
$\Rel(M^*)$ is $\Apply[\overline z;2](M)$.
\item
If $M$ is of type $1 \times \alpha$ then $\Rel(\one(M))$ is
$\pi_3(\Apply[1;2](M))$.  Here, $1$ stands for the constant $1$
function.
\item
If $M$ is of type $\alpha \times 1$ then $\Rel(\diag(M))$ is 
$$\sigma_{\$1=\$2}(\pi_1(M) \times M) \cup
\Apply[0;\,] \, \sigma_{\$1\neq \$2}(\pi_1(M) \times \pi_1(M)).$$
\item
If $M$ is of type $\alpha \times \beta$ and $N$ is of type $\beta
\times \gamma$, then $\Rel(M \cdot N)$ is $$ \Sum[7;1,5] \,
\Apply[\times;3,6] \, \sigma_{\$2=\$4}(M \times N).$$  If,
however, $M$ is of type $\alpha \times 1$ and $N$ is of type $1
\times 1$, then $\Rel(M \cdot N)$ is $$\pi_{1,4} \, \Apply[\times;2,3](M
\times N).$$  We use pointwise multiplication.
\item
If $M$ and $N$ are of type $1 \times \beta$ then
$\Rel(\Apply[f](M,N))$ is $\pi_{1,5} \, \Apply[f;2,4] \,
\sigma_{\$1=\$3}(M \times N)$.
\end{itemize}
We may ignore the let-construct as it does not add expressive
power.
\end{proof}

\begin{remark}
The different treatment of general types, vector types, and
scalar types is necessary because in our version of the
relational algebra, selections can only compare base columns for equality;
in particular we can not select for the value 1.
\end{remark}

\begin{remark}
We can sharpen the above theorem a bit if we work in the
relational calculus with aggregates.  Every $\ML$ expression can
already be expressed by a formula in the relational calculus with
summation that uses only three distinct base variables (variables
ranging over values in base columns).  The details are given in
the Appendix.
\end{remark}

\subsection{Expressing graph queries}

So far we have looked at expressing matrix queries in terms of
relational queries.  It is also natural to express relational
queries as matrix queries.  This works best for binary relations,
or graphs, which we can represent by their adjacency matrices.  

Formally, define a \emph{graph schema} to be a relational schema
where every relation variable is assigned the type $(\Tb,\Tb)$ of
arity two.  We define a \emph{graph instance} as an instance $I$
of a graph schema, where the active domain of $I$ equals
$\{1,\dots,n\}$ for some positive natural number $n$.  The
assumption that the active domain always equals an initial
segment of the natural numbers is convenient for forming the
bridge to matrices.  This assumption, however, is not essential
for our results to hold.  Indeed, the logics we consider do not
have any built-in predicates on base variables, besides equality.
Hence, they view the active domain elements as abstract data
values.

To every graph schema $\scm$ we associate a matrix schema
$\Mat(\scm)$, where $\Mat(\scm)(R) = \alpha \times \alpha$ for
every $R \in \var(\scm)$, for a fixed size symbol $\alpha$.  So,
all matrices are square matrices of the same dimension.  Let $I$
be a graph instance of $\scm$, with active domain
$\{1,\dots,n\}$.  We will denote the $n \times n$ adjacency
matrix of a binary relation $r$ over $\{1,\dots,n\}$ by
$\Adj_I(r)$.  Now any such instance $I$ is represented by the
matrix instance $\Mat(I)$ over $\Mat(\scm)$, where $\Mat(I)(R) =
\Adj_I(I(R))$ for every $R \in \var(\scm)$.

A \emph{graph query} over a graph schema $\scm$ is a function
that maps each graph instance $I$ of $\scm$ to a binary relation
on the active domain of $I$.  We say that a $\ML$ expression $e$
\emph{expresses} the graph query $q$ if $e$ is well-typed over
$\Mat(\scm)$ with output type $\alpha \times \alpha$, and for
every graph instance $I$ of $\scm$, we have $\Adj_I(q(I)) =
e(\Mat(I))$.

We can now give a partial converse to Theorem~\ref{sumtheorem}.
We assume active-domain semantics for first-order logic
\cite{ahv_book}.  Please note that the following result deals
only with pure first-order logic, without aggregates or numerical
columns.  The proof, while instructive, has been relegated to the
Appendix.

\begin{theorem} \label{fo3}
Every graph query expressible in $\rm FO^3$ (first-order
logic with equality, using at most three distinct
variables) is expressible in $\ML$.
The only functions needed in pointwise
applications are boolean functions on $\{0,1\}$, and testing if a
number if positive.
\end{theorem}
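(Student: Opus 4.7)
The plan is to translate each $\mathrm{FO}^3$ formula $\varphi$ whose free variables lie in $\{x,y\}$ into an $\ML$ expression over $\Mat(\scm)$, by induction on the size of $\varphi$. The translation produces a $0/1$-valued object whose shape follows the free variables: an $n\times n$ matrix when $\mathrm{free}(\varphi)=\{x,y\}$, an $n\times 1$ column vector when $\mathrm{free}(\varphi)$ is a singleton, and a $1\times 1$ scalar when $\varphi$ is a sentence. Throughout, the only pointwise functions used are the boolean operations $\neg,\land,\lor$ on $\{0,1\}$ together with a positivity test $\mathrm{pos}(t)=1$ iff $t>0$. Without loss of generality I first eliminate vacuous quantifiers, so in every subformula of the form $\exists u\,\chi$ the variable $u$ is actually free in $\chi$.

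The routine cases are direct. An atomic formula $R(x,y)$ becomes the adjacency matrix $A$ of $R$, $R(y,x)$ becomes $\transp A$, and $x=y$ becomes the identity $\diag(\one(A))$. A diagonal atom $R(x,x)$ becomes the column vector $\Apply[\land](A,\diag(\one(A)))\cdot \one(A)$, which extracts the diagonal of $A$. Boolean connectives are handled by pointwise applications of the corresponding boolean functions, after broadcasting any vector $v$ to the matrix $v\cdot\transp{\one(v)}$ or $\one(v)\cdot\transp v$ whenever it has to be combined with a matrix-valued subformula. An existential $\exists z\,\varphi$ whose body has at most two free variables in total is handled by translating $\varphi$ by induction and then projecting the $z$-axis away using $\one$ and $\mathrm{pos}$: for instance, if $\mathrm{free}(\varphi)=\{x,z\}$ and the induction returns a matrix $M$, the expression $\Apply[\mathrm{pos}](M\cdot\one(M))$ computes $\exists z\,\varphi$.

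The genuinely interesting case is $\exists z\,\rho(x,y,z)$ where all three variables occur free in the body $\rho$, since $\rho$ has no two-dimensional representation. Here I exploit a key feature of $\mathrm{FO}^3$: every non-vacuously quantified subformula of $\rho$ binds one of the three variable names, so it has at most two free variables and can be translated by the induction hypothesis. Replacing every maximal quantified subformula of $\rho$ by a fresh ``computed atom'' reduces $\rho$ to a quantifier-free formula whose atomic building blocks are the original relational literals together with the computed atoms. I rewrite this quantifier-free formula into disjunctive normal form and push $\exists z$ inside. Each inner conjunction is then grouped by the variables that occur in each literal:
\[
\alpha_{xy}(x,y)\,\land\,\alpha_{xz}(x,z)\,\land\,\alpha_{yz}(y,z)\,\land\,\alpha_x(x)\,\land\,\alpha_y(y)\,\land\,\alpha_z(z),
\]
where each factor is a (possibly empty) conjunction of literals in exactly those variables; empty factors become the all-ones object $\one(A)\cdot\transp{\one(A)}$ or $\one(A)$ of suitable shape. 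Only the three $z$-involving factors participate in the quantification, and their conjunction is computed as $\Apply[\mathrm{pos}](B\cdot\diag(d)\cdot\transp C)$, where $B,C$ are the matrices for $\alpha_{xz},\alpha_{yz}$ and $d$ is the column vector for $\alpha_z$. The result is then pointwise AND-ed with $\alpha_{xy}(x,y)$ and with the broadcast versions of $\alpha_x(x)$ and $\alpha_y(y)$, and the outer disjunction of the DNF is taken by pointwise OR.

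The main obstacle is precisely this three-variable quantification: a three-free-variable body has no direct matrix representation, so it has to be dismantled syntactically before evaluation. The decomposition by variable group, combined with the observation that $B\cdot\diag(d)\cdot\transp C$ followed by a pointwise positivity test exactly realizes the existential of a conjunction of $z$-literals, is the heart of the argument; once it is in place, the induction on formula size closes, and the bookkeeping for the easy cases is routine.
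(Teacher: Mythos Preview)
Your proof is correct, but it takes a different route from the paper. The paper invokes the classical Tarski--Givant equivalence between $\mathrm{FO}^3$ over binary relational vocabularies and the calculus of binary relations (with the operations \emph{all}, identity, union, set difference, converse, and relational composition), and then simply observes that each of these six operations is expressible in $\ML$: composition $r\circ s$ becomes $\Apply[{>}0](r\cdot s)$, converse is transpose, union and difference are pointwise boolean operations, identity is $\diag(\one(R))$, and \emph{all} is $\one(R)\cdot\transp{\one(R)}$.

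What you do instead is essentially re-prove the Tarski--Givant translation inline, by a direct induction on $\mathrm{FO}^3$ formulas. Your key case $\exists z\,\rho(x,y,z)$ is exactly the place where that classical argument does its work: you flatten $\rho$ into a boolean combination of two-variable ``atoms'' (original literals plus inductively computed quantified subformulas), push the quantifier through DNF, and realize each conjunct by $\Apply[\mathrm{pos}](B\cdot\diag(d)\cdot\transp C)$ followed by pointwise conjunctions. This is correct; the DNF blow-up is irrelevant since only expressibility is at stake. The payoff of your approach is that it is self-contained and makes transparent exactly how matrix multiplication implements the three-variable existential, whereas the paper's approach is shorter but relies on an external citation. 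Functionally, the two arguments coincide: your $B\cdot\diag(d)\cdot\transp C$ step is relational composition (possibly restricted by a unary predicate), and your pointwise boolean layer is the union/difference layer of the relation algebra.
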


We can complement the above theorem by showing that the
quintessential first-order query requiring four variables is not
expressible.  The proof is given in the Appendix.

\begin{proposition} \label{4clique}
The graph query over a single binary relation $R$ that maps $I$
to $I(R)$ if $I(R)$ contains a four-clique, and to the empty
relation otherwise, is not expressible in $\ML$.
\end{proposition}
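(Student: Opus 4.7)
The plan is to combine Theorem~\ref{sumtheorem} with the locality of the relational algebra with aggregates \cite{libkin_sql}, following the same line of reasoning the Introduction sketches for transitive closure. Concretely, suppose for contradiction that some $\ML$ expression $e$ expresses the four-clique query $q$. By Theorem~\ref{sumtheorem} there is an equivalent expression $\Rel(e)$ in the relational algebra with summation, and Libkin's theorem provides a radius $r \geq 0$, depending only on $\Rel(e)$, such that membership of a tuple in the output of $\Rel(e)$ depends only on the isomorphism type of its $r$-neighborhood in the Gaifman graph of the input, even when comparing across different inputs.

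Next I would exhibit two graph instances witnessing that $q$ is not Gaifman-local at any radius $r$. The natural choice is $I_1 = C_n \sqcup K_4$ (the disjoint union of an $n$-cycle on $\{1,\ldots,n\}$ and a four-clique on four fresh vertices) and $I_2 = C_n$ (the cycle alone), for some $n > 2r + 4$. Since $I_1$ contains a four-clique, $q(I_1) = I_1(R)$, so every cycle edge $(a,b)$ belongs to $q(I_1)$. On the other hand, $I_2$ is for $n \geq 5$ a simple cycle and contains no four-clique, so $q(I_2) = \emptyset$ and $(a,b) \notin q(I_2)$. But for $n$ large compared to $r$, the $r$-neighborhood of $(a,b)$ in the Gaifman graph of $\Rel(I_1)$ is isomorphic to the $r$-neighborhood of $(a,b)$ in $\Rel(I_2)$: both consist of a path of length roughly $2r + 1$ centered at the edge, together with the global numerical constants $0$ and $1$ that appear in the relational encoding. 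Locality of $\Rel(e)$ then forces $(a,b)$ to have the same status in both outputs, contradicting $(a,b) \in q(I_1) \setminus q(I_2)$ together with Theorem~\ref{sumtheorem}.

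The main obstacle is invoking the correct form of Libkin's theorem. What one needs is the \emph{cross-structure} version of Gaifman locality, which compares tuples across two possibly different inputs, rather than merely the within-structure version; one must also check that it remains applicable in the setting where the relational encoding contains a numerical column carrying matrix entries. Since on graph inputs that column takes only the values $0$ and $1$, and these appear identically in $\Rel(I_1)$ and $\Rel(I_2)$, the Gaifman $r$-neighborhoods really are isomorphic and the standard locality argument goes through unchanged. A minor secondary point is to rule out degenerate small cases where $C_n$ could accidentally host a four-clique, which is avoided by taking $n \geq 5$.
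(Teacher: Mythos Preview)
Your approach has a genuine gap: the ``cross-structure version of Gaifman locality'' you invoke does not hold for the relational algebra with aggregates.  What Libkin's theorem gives is \emph{within-structure} Gaifman locality: in a fixed instance, tuples with isomorphic $r$-neighborhoods receive the same answer.  It does \emph{not} let you compare a tuple in $I_1$ with a tuple in $I_2$.  Indeed, aggregate logic can express global counting properties such as ``return all of $R$ if the active domain has more than $n+2$ elements'' or ``return all of $R$ if some vertex has degree $\geq 3$''; either of these already separates your $I_1=C_n\sqcup K_4$ from $I_2=C_n$ while leaving the $r$-neighborhood of the cycle edge $(a,b)$ untouched.  So the form of locality you need simply fails.

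Nor can you fall back on within-structure Gaifman locality: the four-clique query is boolean-like (on any instance it returns either all of $R$ or nothing), hence it is trivially Gaifman-local on every single instance, and no contradiction can arise that way.  This is exactly why the locality route that works for transitive closure breaks down here.  The paper instead uses the sharper three-base-variable translation (Proposition~\ref{calc3} and Lemma~\ref{lemmapsi}) to place the query in $C^3$, and then applies the Cai--F\"urer--Immerman construction to $K_4$ to produce two graphs indistinguishable in $C^3$ but separated by the existence of a four-clique along paths of length three.
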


\section{Matrix inversion} \label{secinv}

Matrix inversion (solving nonsingular
systems of linear equations) is an ubiquitous
operation in data analysis.  We can extend $\ML$ with matrix
inversion as follows.  Let $\scm$ be a schema and $e$ be an
expression that is well-typed over $\scm$, with output type of
the form $\alpha \times \alpha$.  Then the expression $e^{-1}$ is
also well-typed over $\scm$, with the same output type $\alpha
\times \alpha$.  The semantics is defined as follows.  For an
instance $I$, if $e(I)$ is an invertible matrix, then $e^{-1}(I)$
is defined to be the inverse of $e(I)$; otherwise, it is defined
to be the zero square matrix of the same dimensions as $e(I)$.
The extension of $\ML$ with inversion is denoted by $\ML + \inv$.

\begin{example}[PageRank]
Recall Example~\ref{ex:google-matrix} where we computed the
Google matrix of $A$.  In the process we already showed how to
compute the $n \times n$ matrix $B$ defined by $B_{i,j} =
A_{i,j}/k_i$, and the scalar $N$ holding the value $n$.
So, in the
following expression, we assume we already have $B$ and $N$.
Let $I$ be the $n \times n$ identity matrix, and
let $\one$ denote the $n \times 1$ column vector consisting of all
ones.  The PageRank vector $v$ of $A$ can be computed as follows
\cite{delcorso}:
$$ v = \frac{1-d}n(I - dB)^{-1} \one. $$  This calculation is
readily expressed in $\ML + \inv$ as
$$ (1-d) \odot \Apply[1/x](N) \odot \Apply[-](\diag(\one(A)),d
\odot B)^{-1} \cdot \one(A). $$
\end{example}

\begin{example}[Transitive closure] \label{tc}
We next show that the reflexive-transitive closure of a binary
relation is expressible in $\ML +
\inv$.  Let $A$ be the adjacency matrix of a binary relation $r$ on
$\{1,\dots,n\}$.   Let $I$ be the $n \times n$ identity matrix,
expressible as $\diag(\one(A))$.
From earlier examples we know how to compute the scalar $1 \times
1$ matrix $N$ holding the value $n$.
The matrix $B = \frac 1{n+1} A$ has 1-norm
strictly less than 1, so $S = \sum_{k=0}^\infty B^k$ converges,
and is equal to $(I - B)^{-1}$
\cite[Lemma~2.3.3]{golub}.  Now $(i,j)$ belongs to the
reflexive-transitive closure of $r$ if and only if $S_{i,j}$ is
nonzero.  Thus, we can express the
reflexive-transitive closure of $r$ as
$$ \Apply[\neq 0] \bigl (
\Apply[-](\diag(\one(A)) ,
\Apply[1/(x+1)](N) \odot A)^{-1} \bigr ), $$
where $x \neq 0$ is $1$ if $x \neq 0$ and $0$ otherwise.
We can obtain the transitive closure by multiplying the above
expression with $A$.
\qed
\end{example}

By Theorem~\ref{sumtheorem}, any graph query expressible in $\ML$ is
expressible in the relational algebra with aggregates.  It is
known \cite{hlnw_aggregate,libkin_sql} that such queries are
local.  The transitive-closure query from Example~\ref{tc},
however, is not local.  We thus conclude:

\begin{theorem}
$\ML + \inv$ is strictly more powerful than $\ML$ in expressing
graph queries.
\end{theorem}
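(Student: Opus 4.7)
The plan is to prove strictness by exhibiting transitive closure as a separating graph query. The inclusion direction is trivial: every $\ML$ expression is, syntactically and semantically, an $\ML+\inv$ expression, so $\ML+\inv$ is at least as expressive as $\ML$ on any class of queries, graph queries included. All the content is in the separation, for which I would use the query mapping a graph instance $I$ with active domain $\{1,\dots,n\}$ to the reflexive-transitive closure of $I(R)$ (or, alternatively, the transitive closure; the non-locality argument works equally well for both).

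For the positive side, nothing new is needed: Example~\ref{tc} already produces an explicit $\ML + \inv$ expression whose output matrix is, up to the fixed $\Apply[\neq 0]$ thresholding, the adjacency matrix of the reflexive-transitive closure of the input adjacency matrix $A$. Hence the separating query is in $\ML + \inv$.

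For the negative side, the plan is to chain two known facts. First, suppose for contradiction that some $\ML$ expression $e$ expresses (reflexive-)transitive closure. By Theorem~\ref{sumtheorem}, applied with output type $\alpha \times \alpha$ and $\ell = 2$, there is an expression $\Rel(e)$ in the relational algebra with summation such that on every graph instance $I$, the relation $\Rel(e)(\Rel(\Mat(I)))$ equals $\Rel_2$ of the closure's adjacency matrix. Second, by the locality results for the relational algebra with aggregates \cite{hlnw_aggregate,libkin_sql}, every query so expressible is Gaifman-local. The contradiction then comes from the classical non-locality of transitive closure: one chooses two graphs that look locally identical around every pair of vertices but differ on reachability, for example the disjoint union of two cycles of length $2k+1$ versus a single cycle of length $4k+2$, with $k$ larger than the locality radius dictated by $\Rel(e)$. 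Reachability distinguishes these two graphs, while locality forces $\Rel(e)$ to agree on them, yielding the desired contradiction.

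The only point that requires a little care, rather than routine invocation, is that the locality theorem must be applied to the ternary relational encoding $\Rel_2$ of the output adjacency matrix (third column numerical, taking the values $0$ and $1$), not to a binary relation directly; I would state this explicitly and observe that the distinguishing pairs of vertices above still witness non-locality at the level of this encoded output, since the $(i,j,1)$ tuples it produces encode exactly the reachability relation. With this matching in place, the argument is immediate.
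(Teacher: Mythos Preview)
Your proposal is correct and follows essentially the same approach as the paper: Example~\ref{tc} for the positive direction, Theorem~\ref{sumtheorem} combined with the locality of relational algebra with aggregates \cite{hlnw_aggregate,libkin_sql} for the negative direction, and non-locality of transitive closure to finish. You spell out more than the paper does (the explicit cycle counterexample and the remark about the ternary output encoding), but the argument and its ingredients are the same.
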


Once we have the transitive closure, we can do many
other things such as checking bipartiteness of undirected graphs,
checking connectivity, checking cyclicity.  $\ML$ is expressive
enough to reduce these queries to the transitive-closure query,
as shown in the following example for bipartiteness.  The
same approach via $\rm FO^3$  can be used for connectedness or cyclicity.

\begin{example}[Bipartiteness]
To check bipartiteness of an undirected graph, given as a
symmetric binary relation $R$ without self-loops, we
first compute the transitive closure $T$ of the composition of
$R$ with itself.  Then the
$\rm FO^3$ condition $\neg \exists x \exists y (R(x,y)
\land T(y,x))$ expresses that $R$ is bipartite (no odd cycles).
The result now follows from Theorem~\ref{fo3}.
\end{example}

\begin{example}[Number of connected components]
Using transitive closure we can also easily compute the number of
connected components of a binary relation $R$ on $\{1,\dots,n\}$, given as an
adjacency matrix.  We start from the union of $R$ and its
converse. This union, denoted by $S$,
is expressible by Theorem~\ref{fo3}.  We then
compute the reflexive-transitive
closure $C$ of $S$.  Now the number of connected
components of $R$ equals $\sum_{i=1}^n 1/k_i$, where $k_i$ is the
degree of node $i$ in $C$.  This sum is simply expressible as $\one(C)^*
\cdot \Apply[1/x](C \cdot \one(C))$.
\end{example}

\section{Eigenvalues} \label{seceigen}

Another workhorse in data analysis is diagonalizing a matrix,
i.e., finding a basis of eigenvectors.  Formally, we define the
operation $\eigen$ as follows.  Let $A$ be an $n \times n$
matrix.  Recall that $A$ is called diagonalizable if there exists
a basis of $\C^n$ consisting of eigenvectors of $A$.  In that
case, there also exists such a basis where eigenvectors
corresponding to a same eigenvalue are orthogonal. Accordingly,
we define $\eigen(A)$ to return an $n \times n$ matrix, the
columns of which form a basis of $\C^n$ consisting of
eigenvectors of $A$, where eigenvectors corresponding to a same
eigenvalue are orthogonal.  If $A$ is not diagonalizable,
we define $\eigen(A)$ to be the $n \times n$ zero matrix.
\begin{comment}
we choose to leave $\eigen(A)$ to be undefined.
Indeed, we are not aware of an efficient
test for diagonalizability.
\end{comment}

Note that $\eigen$ is nondeterministic; in principle there are
infinitely many possible results.  This models the situation in
practice where numerical packages such as R or \textsf{MATLAB}
return approximations to the eigenvalues and a set of
corresponding eigenvectors, but the latter are not unique.
Hence, some care must be taken in extending $\ML$ with the
$\eigen$ operator.  Syntactically, as for inversion, whenever $e$
is a well-typed expression with a square output type, we now also
allow the expression $\eigen(e)$, with the same output type.
Semantically, however, the rules of Figure~\ref{fig:bigstep} must
be adapted so that they do not infer statements of the form
$e(I)=B$, but rather of the form $B \in e(I)$, i.e., $B$ is a
possible result of $e(I)$.  The let-construct now becomes
crucial; it allows us to assign a possible result of $\eigen$ to
a new variable, and work with that intermediate result
consistently.

In this and the next section, we assume notions from linear algebra. An
excellent introduction to the subject has been given by Axler
\cite{ladoneright}.

\begin{remark}[Eigenvalues]
We can easily recover the eigenvalues from the eigenvectors,
using inversion.  Indeed, if $A$ is diagonalizable and $B \in
\eigen(A)$, then $\Lambda=B^{-1} A B$ is a diagonal matrix
with all eigenvalues of $A$ on the diagonal, so that the
$i$th eigenvector in $B$ corresponds to the eigenvalue in the $i$th
column of $\Lambda$.  This is the well-known eigendecomposition.
However, the same can also be accomplished without using inversion.  
Indeed, suppose $B = (v_1,\dots,v_n)$, and let $\lambda_i$ be the
eigenvalue to which $v_i$ corresponds.  Then $A B =
(\lambda_1v_1,\dots,\lambda_nv_n)$.  Each eigenvector is nonzero,
so we can divide away the entries from $B$ in $A B$ (setting
division by zero to zero).  We thus
obtain a matrix where the $i$th column consists of zeros or
$\lambda_i$, with at least one occurrence of $\lambda_i$.  By
counting multiplicities, dividing them out, and finally summing,
we obtain $\lambda_1$, \dots, $\lambda_n$ in a column vector.
We can apply a final $\diag$ to get it back into diagonal form.
The $\ML$ expression for doing all this uses similar tricks as those shown
in Examples \ref{ex:google-matrix} and \ref{exmin}.
\qed
\end{remark}

The above remark suggests a shorthand in $\ML + \eigen$ where we
return both $B$ and $\Lambda$ together: $$ \mathsf{let} \
(B,\Lambda) = \eigen(A) \ \mathsf{in} \ \dots $$ This models how
the $\eigen$ operation works in the languages R and
\textsf{MATLAB}.  We agree that $\Lambda$, like $B$, is the zero
matrix if $A$ is not diagonalizable.

\begin{example}[Rank of a matrix] \label{exrank}
Since the rank of a diagonalizable matrix equals the number of
nonzero entries in its diagonal form, we can express the rank of
a diagonalizable matrix $A$ as follows:
$$
\mathsf{let} \ (B,\Lambda) =\eigen(A) \ \mathsf{in} \
\one(A)^* \cdot \Apply[\neq 0](\Lambda) \cdot \one(A). $$
\end{example}

\begin{example}[Graph partitioning]
A well-known heuristic for partitioning an undirected graph
without self-loops
is based on an eigenvector corresponding to the second-smallest
eigenvalue of the Laplacian matrix \cite{ullman_mining}.  
The Laplacian $L$ can be derived from the adjacency matrix $A$
as \textsf{let $D = \diag(A \cdot \one(A))$ in $\Apply[-](D,A)$}.
(Here $D$ is the degree matrix.)  Now let $(B,\Lambda) \in
\eigen(L)$.
In an analogous way to Example~\ref{exmin}, we can compute a
matrix $E$, obtained from $\Lambda$ by replacing the occurrences
of the second-smallest eigenvalue by 1 and all other entries
by 0.  Then the eigenvectors corresponding to this eigenvalue can
be isolated from $B$ (and the other eigenvectors zeroed out) by
multiplying $B \cdot E$.
\qed
\end{example}

It turns out that $\ML + \inv$ is subsumed by $\ML + \eigen$.
The proof is in the Appendix.

\begin{theorem} \label{inv2eigen}
Matrix inversion is expressible in $\ML + \eigen$.
\end{theorem}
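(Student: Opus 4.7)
The plan is to exploit the standard identity $A^{-1} = (A^* A)^{-1} A^*$, which reduces inversion of an arbitrary $n \times n$ matrix $A$ to inversion of $C := A^* \cdot A$. The advantage of $C$ is that it is Hermitian and positive semi-definite, hence diagonalizable with eigenvectors for distinct eigenvalues automatically orthogonal. Combined with the guarantee of $\eigen$ that eigenvectors for a same eigenvalue are returned orthogonal, this means that if we set $(B,\Lambda) := \eigen(C)$ (using the shorthand from the preceding Remark, which is itself expressible in $\ML + \eigen$ without $\inv$), then the columns $b_1,\dots,b_n$ of $B$ are pairwise orthogonal, and $C = B' \Lambda (B')^*$ for the normalized matrix $B'$ whose columns are $b_i/\|b_i\|$. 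Moreover $C$ is invertible iff $A$ is, iff every diagonal entry of $\Lambda$ is nonzero.

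Given this, I would carry out the construction in three stages. First, \emph{normalize} $B$: obtain the row vector of squared column norms as $w := \one(A)^* \cdot \Apply[|z|^2](B)$, form the diagonal matrix $D := \diag(\Apply[1/\sqrt{x}](w^*))$, and set $B' := B \cdot D$, which then satisfies $(B')^* \cdot B' = I$ and thus $(B')^{-1} = (B')^*$. Second, since $\Lambda$ is diagonal we have $C^{-1} = B' \cdot \Apply[1/x](\Lambda) \cdot (B')^*$, and we form the candidate inverse $X := C^{-1} \cdot A^*$. Third, to cover the singular case, I would follow Example~\ref{exrank} to compute the scalar $s := \one(A)^* \cdot \Apply[\neq 0](\Lambda) \cdot \one(A)$ counting the nonzero eigenvalues, recover the dimension $n$ as a $1 \times 1$ matrix as in Example~\ref{ex:google-matrix}, and scalar-multiply (Example~\ref{exscalmul}) $X$ by the $1 \times 1$ indicator $\Apply[=](s,n)$, which is $1$ precisely when $A$ is invertible and $0$ otherwise. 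The resulting expression returns $A^{-1}$ when $A$ is invertible and the zero matrix otherwise, as required.

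The main obstacle is not the algebra -- the identity $A^{-1} = (A^*A)^{-1} A^*$ is textbook -- but ensuring that every intermediate step (column normalization, pointwise reciprocal of $\Lambda$, and branching on invertibility) remains inside $\ML + \eigen$ and never secretly invokes $\inv$. The key enabler is the Hermiticity of $A^* A$: it upgrades the relatively weak orthogonality guarantee of $\eigen$ into full pairwise orthogonality of the columns of $B$, so that inverting the change-of-basis matrix collapses into a per-column rescaling followed by a conjugate transpose, both of which $\ML$ already supports natively.
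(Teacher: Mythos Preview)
Your proposal is correct and follows essentially the same approach as the paper: reduce to inverting the self-adjoint matrix $A^*A$, apply $\eigen$ to obtain a fully orthogonal basis, normalize so that $(B')^{-1}=(B')^*$, invert the diagonal $\Lambda$ pointwise, and finally multiply by $A^*$. The only cosmetic differences are in how the column normalization is carried out (you compute $\sum_i |B_{ij}|^2$ directly via $\Apply[|z|^2]$, whereas the paper uses the off-diagonal-vanishing of the Gram matrix $B^*B$) and in how singularity is detected (you compare the number of nonzero eigenvalues to $n$, the paper checks whether any eigenvalue is zero); one small point you leave implicit, which the paper spells out, is that the pointwise reciprocal in $\Apply[1/x](\Lambda)$ must be extended to a total function so the expression is well-defined even when $A$ is singular.
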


A natural question to ask is if $\ML$ with $\eigen$ is strictly more
expressive than $\ML$ with $\inv$.  In a noninteresting sense,
the answer is affirmative.  Indeed, when evaluating a $\ML + \inv$ expression
on an instance where all matrix entries are rational numbers, the
result matrix is also rational.  In contrast, the eigenvalues of
a rational matrix may be complex numbers.  The more interesting
question, however, is: \emph{Are there graph queries
expressible deterministically in $\ML + \eigen$,
but not in $\ML + \inv$?}  This is an
interesting question for further research.  The answer may depend
on the functions that can be used in pointwise applications.

\begin{remark}[Determinacy]
The stipulation \emph{deterministically} in the above open question is
important.  Ideally, we use the nondeterministic $\eigen$
operation only as an intermediate construct.  It is an aid to
achieve a powerful computation, but the
final expression should have only a single possible output on
every input.
The expression of Example~\ref{exrank} is deterministic in this
sense, as is the expression for inversion described in the proof
of Theorem~\ref{inv2eigen}.
\end{remark}

\section{The evaluation problem}
\label{seceval}

\newcommand{\FV}{\mathrm{FV}}

The evaluation problem asks, given an input instance $I$ and an
expression $e$, to compute the result $e(I)$.  There are some
issues with this naive formulation, however.  Indeed, in our
theory we have been working with arbitrary complex numbers.  How
do we even represent the input?  For practical applications, it
is usually sufficient to support matrices with rational numbers
only.  For $\ML + \inv$, this approach works: when the input is
rational, the output is rational too, and can be computed in
polynomial time.  For the basic matrix operations this is clear,
and for matrix inversion we can use the well known method of
Gaussian elimination.

When adding the $\eigen$ operation, however, the output may
become irrational.  Much worse, the eigenvalues of an adjacency
matrix (even of a tree) need not even be definable in radicals
\cite{godsil_radicals}.  Practical systems, of course, apply
techniques from numerical mathematics to compute rational
approximations.  But it is still theoretically interesting to
consider the exact evaluation problem.

Our approach is to represent the output symbolically, following
the idea of constraint query languages \cite{kkr_cql,cdbbook}.
Specifically,
we can define the input-output relation of an expression, for given
dimensions of the input matrices, by an existential first-order
logic formula over the reals.  Such formulas are
built from real variables, integer constants, addition,
multiplication, equality, inequality ($<$), disjunction,
conjunction, and existential quantification.

\begin{example} \label{exeigenformula}
Consider the expression $\eigen(M)$ over the schema
consisting of a single matrix variable $M$.  Any instance $I$
where $I(M)$ is an $n \times n$ matrix $A$ can be represented by a
tuple of $2 \times n \times n$ real numbers.
Indeed, let $a_{i,j} = \Re A_{i,j}$ (the real part of a complex
number), and let $b_{i,j} = \Im A_{i,j}$ (the imaginary part).
Then $I(M)$ can be represented
by the tuple $(a_{1,1},b_{1,1},a_{1,2},b_{1,2},\dots,a_{n,n},b_{n,n})$.
Similarly, any $B \in \eigen(A)$ can be represented by a similar
tuple.  We introduce the
variables $x_{M,i,j,\Re}$, $x_{M,i,j,\Im}$, $y_{i,j,\Re}$, and
$y_{i,j,\Im}$, for $i,j \in \{1,\dots,n\}$, where the $x$-variables
describe an arbitrary input matrix and the $y$-variables describe
an arbitrary
possible output matrix.  Denoting the input matrix by $[\bar x]$
and the output matrix by $[\bar y]$, we can now write an
existential formula expressing that
$[\bar y]$ is a possible result of $\eigen$ applied to $[\bar x]$:
\begin{itemize}
\item
To express that $[\bar y]$ is a basis, we write that there exists
a nonzero matrix $[\bar z]$ such that $[\bar y] \cdot [\bar z]$
is the identity matrix.  It is straightforward to express this
condition by a formula.
\item
To express, for each column vector $v$ of $[\bar y]$, that $v$ is an
eigenvector of $[\bar x]$, we write that there exists $\lambda$
such that $[\bar x] \cdot v = \lambda[\bar x]$.
\item
The final and most difficult condition to express is that
distinct eigenvectors $v$ and $w$ that correspond to a same eigenvalue are
orthogonal.  We cannot write $\exists \lambda ([\bar x]\cdot v =
\lambda v \land [\bar x]\cdot w = \lambda w) \to v^* \cdot w =
0$, as this is not a proper existential formula.  (Note though that the
conjugate transpose of $v$ is readily expressed.)  Instead, we
avoid an explicit quantifier and rewrite the antecedent as
the conjunction, over all positions $i$, of $v_i
\neq 0 \neq w_i \to
([\bar x] \cdot v)_i / v_i = 
([\bar x] \cdot w)_i / w_i$.
\item
A final detail is that we should also be able to express that
$[\bar x]$ is not diagonalizable, for in that case we need to
define $[\bar y]$ to be the zero matrix.  Nondiagonalizability
is equivalent to the existence of
a Jordan form with at least one 1 on
the superdiagonal.  We can express this as follows. We postulate the
existence of an invertible matrix $[\bar z]$ such that the
product $[\bar z] \cdot [\bar x] \cdot [\bar z]^{-1}$
has all entries zero, except those on the diagonal and
the superdiagonal.  The entries on the superdiagonal can only by
0 or 1, with at least one 1.  Moreover, if an entry $i,j$ on
  the superdiagonal is nonzero, the entries $i,i$ and
  $j,j$ must be equal.
\qed
\end{itemize}
\end{example}

The approach taken in the above example leads to the following
general result.  The operations of $\ML$ are handled using
similar ideas as illustrated above for the $\eigen$ operation,
and are actually easier.  The let-construct, and the composition
of subexpressions into larger expression, are handled by
existential quantification.

\begin{theorem} \label{existsrtheorem}
An input-sized expression consists of a schema $\scm$, an expression $e$
in $\ML + \eigen$ that is well-typed over $\scm$ with output type
$t_1 \times t_2$, and a size assignment $\sigma$ defined on the
size symbols occurring in $\scm$.  There exists a polynomial-time
computable translation that maps any input-sized expression as above to
an existential first-order formula $\psi$ over the vocabulary of
the reals, expanded with symbols for the functions used in
pointwise applications in $e$, such that
\begin{enumerate}
\item
Formula $\psi$ has the following free variables:
\begin{itemize}
\item
For every $M \in \var(\scm)$, let $\scm(M)=s_1\times s_2$.  Then
$\psi$ has the free variables
$x_{M,i,j,\Re}$ and $x_{M,i,j,\Im}$,
for $i=1,\dots,\sigma(s_1)$
and $j=1,\dots,\sigma(s_2)$.
\item
In addition, $\psi$ has the free variables
$y_{i,j,\Re}$ and $y_{i,j,\Im}$, 
for $i=1,\dots,\sigma(t_1)$
and $j=1,\dots,\sigma(t_2)$.
\end{itemize}
The set of these free variables is denoted by
$\FV(\scm,e,\sigma)$.
\item
Any assignment $\rho$ of real numbers to these variables
specifies, through the
$x$-variables, an instance
$I$ conforming to $\scm$ by $\sigma$, and through the
$y$-variables, a $\sigma(t_1)\times \sigma(t_2)$ matrix $B$.
\item
Formula $\psi$ is true over the reals under such an assignment
$\rho$, if and only if $B \in e(I)$.
\end{enumerate}
\end{theorem}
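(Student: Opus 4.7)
The plan is to proceed by structural induction on the expression $e$, building for every subexpression $e'$ (with inferred output type $s_1' \times s_2'$) an existential real formula $\psi_{e'}$ whose free variables are the $x_{M,i,j,*}$-variables of all input matrices together with a fresh block of $y^{e'}_{i,j,*}$-variables for $i \in \{1,\dots,\sigma(s_1')\}$, $j \in \{1,\dots,\sigma(s_2')\}$, describing the output matrix of $e'$. All intermediate dimensions are computed in polynomial time from the typing derivation of $e$ and the given $\sigma$, so this allocation of variable blocks is direct. Complex numbers are simulated throughout by pairs of reals using the standard polynomial identities for complex addition and multiplication.

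First I would dispatch the easy inductive cases. The variable case is an equality between the $x$- and $y$-blocks of the variable; $\transp{(\cdot)}$, $\one(\cdot)$, and $\diag(\cdot)$ admit obvious quantifier-free definitions; matrix multiplication $e_1 \cdot e_2$ introduces per output entry a sum of at most $\|\sigma\|$ products of intermediate variables (where $\|\sigma\| := \sum_\alpha \sigma(\alpha)$); and $\Apply[f]$ expands the vocabulary with a symbol for $f$ and records entrywise identities. The let-construct, and more generally composition of subexpressions, are handled uniformly by conjoining the subformulas of the parts and existentially quantifying the shared intermediate $y$-blocks.

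The main obstacle is the $\eigen$ case, which I would handle essentially as sketched in Example~\ref{exeigenformula}. Writing $A$ for the argument matrix and $B$ for the output (of dimension $n \times n$ with $n = \sigma(s)$ for the relevant size symbol), $\psi_{\eigen(A)}$ becomes a disjunction of two branches. In the diagonalizable branch I existentially quantify a matrix $Z$ and complex scalars $\lambda_1,\dots,\lambda_n$ and assert $B \cdot Z = I$ (so that $B$ is invertible and hence a basis), $A b_k = \lambda_k b_k$ for each column $b_k$ of $B$, and, for each pair $k \neq \ell$, the clause ``$\lambda_k \neq \lambda_\ell$ or $b_k^* b_\ell = 0$'', a genuine existential disjunction replacing the awkward implication ``same eigenvalue $\Rightarrow$ orthogonal''. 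In the non-diagonalizable branch I force $B$ to be zero and witness non-diagonalizability by existentially quantifying matrices $Z$ and $Z'$ with $Z Z' = I$ such that $Z A Z'$ has all entries outside the diagonal and superdiagonal equal to zero, superdiagonal entries in $\{0,1\}$, at least one superdiagonal entry equal to $1$, and matching diagonal entries at every such $1$-position. The conceptual point that makes this work is that all the witnesses needed (the inverse, the Jordan conjugators, and the eigenvalues themselves) can each be introduced by an explicit existential quantifier, which keeps the entire formula in the existential fragment.

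A final size accounting then shows that every constructor contributes only polynomially many new existential variables and atomic conjuncts in $|e|$ and $\|\sigma\|$, so $\psi$ is polynomial-sized and computable in polynomial time. Correctness in both directions follows by induction from the semantic clauses of $\ML+\eigen$, using the case analysis above for $\eigen$ and the observation that, once $A$ and $B$ are fixed, the invertibility witness $Z$ and the eigenvalue witnesses $\lambda_k$ that the formula demands are in fact uniquely determined, so that the disjunction faithfully captures the nondeterministic semantics $B \in \eigen(A)$ on which the overall statement $B \in e(I)$ is built.
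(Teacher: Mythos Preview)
Your proposal is correct and follows essentially the same approach as the paper: structural induction on the expression, with intermediate output blocks introduced and then existentially quantified for composition and for the let-construct, and with the $\eigen$ case split into a diagonalizable branch (basis via an inverse witness, eigenvector equations, orthogonality for equal eigenvalues) and a non-diagonalizable branch (Jordan-form witness with at least one $1$ on the superdiagonal).

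The one place where you deviate slightly from the paper is in the orthogonality clause. The paper does \emph{not} name the eigenvalues globally; it expresses ``$v$ and $w$ have the same eigenvalue'' by the quantifier-free condition $\bigwedge_i \bigl(v_i\neq 0 \neq w_i \to (Av)_i/v_i = (Aw)_i/w_i\bigr)$, so that the implication ``same eigenvalue $\Rightarrow$ orthogonal'' becomes purely propositional. You instead pull the witnesses $\lambda_1,\dots,\lambda_n$ out to the prefix once and for all and write $\lambda_k\neq\lambda_\ell \lor b_k^*b_\ell=0$. Your version is cleaner and just as valid: since each $b_k$ is already forced to be a nonzero eigenvector, the $\lambda_k$ are uniquely determined, so the outer existential adds no spurious satisfying assignments, and the formula stays existential and polynomial-sized. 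Either encoding works; yours is the more transparent one.
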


The existential theory of the reals is decidable; actually, the
full first-order theory of the reals is decidable
\cite{arnon,basu_algorithms}.
But, specifically the class of problems that can be
reduced in polynomial time to the existential theory of the reals
forms a complexity class on its own, known as $\exists \mathbf R$
\cite{schaefer_existsR,schaefer_nash}.  The above theorem implies 
that the \emph{partial evaluation problem for $\ML + \eigen$}
belongs to this complexity class.  We define this problem as
follows.  The idea is that an arbitrary specification, expressed
as an existential formula $\chi$ over the reals, can be imposed on the
input-output relation of an input-sized expression.

\begin{definition}
The \emph{partial evaluation problem} is a decision problem that
takes as input:
\begin{itemize}
\item
an input-sized expression $(\scm,e,\sigma)$, where all functions
used in pointwise applications are explicitly defined using
existential formulas over the reals;
\item
an existential formula $\chi$ with free variables in
$\FV(\scm,e,\sigma)$ (see
Theorem~\ref{existsrtheorem}).
\end{itemize}
The problem asks if there
exists an instance $I$ conforming to $\scm$ by $\sigma$ and a matrix
$B \in e(I)$ such that $(I,B)$ satisfies $\chi$.
\end{definition}
For example, $\chi$ may completely specify the matrices in $I$ by
giving the values of the entries as rational numbers, and may
express that the output matrix has at least one nonzero entry.

An input $(\scm,e,\sigma,\chi)$ is a yes-instance 
to the partial evaluation problem precisely when the existential sentence
$\exists \FV(\scm,e,\sigma) (\psi \land \chi)$ is true in the
reals, where $\psi$ is the formula obtained by
Theorem~\ref{existsrtheorem}.  Hence we can conclude:

\begin{corollary} \label{upperbound}
The partial evaluation problem for $\ML + \eigen$ belongs to
$\exists \mathbf R$.
\end{corollary}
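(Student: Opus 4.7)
The plan is to derive this directly from Theorem~\ref{existsrtheorem} together with closure properties of the existential theory of the reals. First I would apply the theorem to the input-sized expression $(\scm,e,\sigma)$ to obtain, in polynomial time in the size of the input, an existential first-order formula $\psi$ over the reals (expanded by the symbols for pointwise functions used in $e$) whose free variables are exactly $\FV(\scm,e,\sigma)$ and whose satisfying assignments are precisely the pairs $(I,B)$ with $I$ an instance conforming to $\scm$ by $\sigma$ and $B\in e(I)$. Because the hypothesis of the partial evaluation problem requires the pointwise functions in $e$ to be explicitly defined by existential formulas over the reals, I would then substitute each such function symbol in $\psi$ by its defining formula, introducing the necessary fresh existentially quantified variables to represent function values. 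The result is an existential formula $\psi'$ over the pure vocabulary of the ordered reals, still of polynomial size, with the same free variables and the same satisfying assignments as $\psi$.

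Next I would combine $\psi'$ with the user-supplied existential formula $\chi$: since $\FV(\chi) \subseteq \FV(\scm,e,\sigma)$, the sentence
\[
  \Phi \;:=\; \exists \FV(\scm,e,\sigma)\,(\psi' \land \chi)
\]
is a well-formed existential sentence over the reals, and its size is polynomial in the size of the input $(\scm,e,\sigma,\chi)$. By the semantic characterization in Theorem~\ref{existsrtheorem}, $\Phi$ holds over $\reals$ if and only if there exist real values for the $x$-variables and $y$-variables encoding an instance $I$ (conforming to $\scm$ by $\sigma$) and a matrix $B$ with $B \in e(I)$ and $(I,B) \models \chi$; that is, iff $(\scm,e,\sigma,\chi)$ is a yes-instance of the partial evaluation problem. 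This is a polynomial-time many-one reduction to the existential theory of the reals, placing the partial evaluation problem in $\exists\reals$.

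The only points requiring real care are the two preprocessing steps: the polynomial-time bound for producing $\psi$ (which is asserted by Theorem~\ref{existsrtheorem}) and the fact that eliminating the pointwise-function symbols via their defining existential formulas can be done in polynomial size. The latter is routine because existential formulas over the reals are closed under conjunction and existential quantification, so replacing each occurrence $f(t_1,\dots,t_k) = u$ by the defining formula $\varphi_f(t_1,\dots,t_k,u)$ (with fresh witness variables) preserves both the satisfaction semantics and polynomial size. I expect no deeper obstacle here, since the substantive work---translating the matrix semantics, including the delicate quantifier-free handling of the orthogonality and nondiagonalizability conditions from Example~\ref{exeigenformula}---has already been packaged in Theorem~\ref{existsrtheorem}.
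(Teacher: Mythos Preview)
Your proposal is correct and follows essentially the same route as the paper: apply Theorem~\ref{existsrtheorem} to obtain $\psi$, conjoin with $\chi$, existentially quantify the free variables, and observe that truth of the resulting existential sentence over the reals is equivalent to the input being a yes-instance. The paper compresses all of this into a single sentence immediately preceding the corollary.

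The one point on which you are more careful than the paper is the elimination of the extra function symbols for pointwise applications. Theorem~\ref{existsrtheorem} produces $\psi$ over the reals \emph{expanded} with these symbols, whereas membership in $\exists\mathbf{R}$ requires a sentence over the pure signature. You correctly note that the hypothesis of the partial evaluation problem supplies existential definitions for these functions, and you sketch the standard flattening (introduce a fresh variable for each function-term occurrence, conjoin the defining formula, existentially quantify). The paper leaves this step entirely implicit. Your handling of it is sound and does not change the overall strategy.
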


Since the full theory of the reals is decidable, our theorem
implies many other decidability results.  We give just two examples.

\begin{corollary}
The equivalence problem for input-sized expressions is decidable.
This problem takes as input two input-sized expressions
$(\scm,e_1,\sigma)$ and $(\scm,e_2,\sigma)$ (with the same $\scm$
and $\sigma$) and asks if for all instances $I$ conforming to
$\scm$ by $\sigma$, we have $B \in e_1(I) \; \Leftrightarrow \; B
\in e_2(I)$.
\end{corollary}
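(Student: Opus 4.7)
The plan is to reduce the equivalence problem to the decidability of the full first-order theory of the reals, exploiting Theorem~\ref{existsrtheorem}. As a preliminary step, I would verify that the two expressions $e_1$ and $e_2$ share the same output type (by running the typechecker from Figure~\ref{fig:matlang-type-rules}); if not, the equivalence fails trivially by dimension mismatch, and we can return ``no'' immediately. Once the output types agree, so that the free variables $\FV(\scm,e_1,\sigma)$ and $\FV(\scm,e_2,\sigma)$ coincide, I apply the polynomial-time translation of Theorem~\ref{existsrtheorem} to each input-sized expression, producing existential formulas $\psi_1$ and $\psi_2$ over the vocabulary of the reals. By construction, $\psi_i$ is satisfied by an assignment $\rho$ (encoding an instance $I$ through the $x$-variables and a matrix $B$ through the $y$-variables) precisely when $B \in e_i(I)$.

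Equivalence of the two input-sized expressions is then captured by the first-order sentence
\[
  \Phi \;:=\; \forall\, \FV(\scm,e_1,\sigma)\, \bigl(\psi_1 \leftrightarrow \psi_2\bigr),
\]
which is true in the reals if and only if, for every instance $I$ conforming to $\scm$ by $\sigma$ and every candidate output matrix $B$, we have $B \in e_1(I) \Leftrightarrow B \in e_2(I)$. Although $\psi_1$ and $\psi_2$ are existential, $\Phi$ is a general first-order sentence because of the outer universal quantifier and the biconditional. The decidability of $\Phi$ therefore follows from the classical Tarski--Seidenberg theorem, which says the entire first-order theory of the reals is decidable~\cite{arnon,basu_algorithms}; concretely, one applies a quantifier-elimination or cylindrical algebraic decomposition procedure.

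The only real subtlety, and the step that I expect to be the main technical obstacle to spell out carefully, concerns the pointwise functions appearing in $e_1$ and $e_2$. For the translation of Theorem~\ref{existsrtheorem} to actually produce a sentence in the language of ordered fields (so that Tarski--Seidenberg applies), each such function must itself be first-order definable over the reals; as already stipulated in the definition of the partial evaluation problem, we assume the functions are given by explicit existential formulas. Under that assumption, substituting these defining formulas for the function symbols in $\psi_1$ and $\psi_2$ yields a genuine first-order sentence in the pure language of the reals, and decidability applies. Without such a semi-algebraicity hypothesis the statement would not hold, so I would state this assumption explicitly as part of the input format of the equivalence problem, mirroring the convention already adopted for partial evaluation.
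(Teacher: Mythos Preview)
Your proposal is correct and matches the paper's approach: the paper states this corollary without a detailed proof, simply noting that ``since the full theory of the reals is decidable, our theorem implies many other decidability results,'' which is precisely the reduction you spell out via $\Phi = \forall\,\FV(\psi_1 \leftrightarrow \psi_2)$. Your additional care about output-type mismatch and the semi-algebraicity hypothesis on the pointwise functions is appropriate and fills in details the paper leaves implicit.
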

Note that the equivalence problem for $\ML$ expressions on
arbitrary instances (size not fixed) is undecidable by
Theorem~\ref{fo3}, since equivalence of $\rm FO^3$ formulas over
binary relational vocabularies is undecidable \cite{gor_un2}.

\begin{corollary}
The determinacy problem for input-sized expressions is decidable.
This problem takes as input an input-sized expression 
$(\scm,e,\sigma)$ and asks if for
every instance $I$ conforming to
$\scm$ by $\sigma$, there exists at most one $B \in e(I)$.
\end{corollary}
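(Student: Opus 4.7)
The plan is to reduce the determinacy question to a decidable decision problem over the reals, using Theorem~\ref{existsrtheorem} as the bridge. Given an input-sized expression $(\scm,e,\sigma)$ (with pointwise functions defined by existential formulas over the reals, as in the setup of the partial evaluation problem), Theorem~\ref{existsrtheorem} produces in polynomial time an existential formula $\psi(\bar x,\bar y)$ whose free variables split into a tuple $\bar x$ encoding an instance $I$ conforming to $\scm$ by $\sigma$ and a tuple $\bar y$ encoding a candidate output matrix $B$, such that $\psi(\bar x,\bar y)$ holds over the reals exactly when $B \in e(I)$.

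Now introduce a fresh copy $\bar y'$ of the output variables. The expression $(\scm,e,\sigma)$ is determinate if and only if the sentence
$$\Phi \;:=\; \forall \bar x\, \forall \bar y\, \forall \bar y'\, \bigl(\psi(\bar x,\bar y) \land \psi(\bar x,\bar y') \to \bar y = \bar y'\bigr)$$
is true over the reals, where $\bar y = \bar y'$ abbreviates the conjunction of componentwise equalities. Because $\psi$ is existential, pushing negations inside and bringing the formula to prenex form turns $\Phi$ into a universal ($\Pi_1$) sentence in the language of ordered fields augmented by the symbols for the pointwise functions (which are themselves defined existentially and can be inlined). Its negation is an existential sentence, so deciding $\Phi$ reduces, in polynomial time, to the complement of the existential theory of the reals, which is decidable \cite{basu_algorithms}.

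The main obstacle is really bookkeeping rather than mathematics: we must verify that $\Phi$ genuinely lives in the first-order language of the reals. This is guaranteed as soon as the pointwise functions are themselves defined by existential formulas, which is the standing hypothesis of Theorem~\ref{existsrtheorem} and Corollary~\ref{upperbound}. Under that hypothesis no further analysis of $\eigen$, $\inv$, or the let-construct is needed beyond what is already packaged into Theorem~\ref{existsrtheorem}, and we obtain an effective procedure. As a by-product, the determinacy problem in fact sits in co-$\exists\mathbf{R}$, though the corollary only claims bare decidability.
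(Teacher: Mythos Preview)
Your proposal is correct and follows the same approach implicit in the paper, which simply notes that ``since the full theory of the reals is decidable, our theorem implies many other decidability results'' and then states this corollary without further proof. Your additional observation that the sentence $\Phi$ is in fact $\Pi_1$ (so that determinacy lies in co-$\exists\mathbf{R}$) is a pleasant sharpening beyond what the paper states.
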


Corollary~\ref{upperbound} gives an $\exists \mathbf R$ upper
bound on the combined complexity of query evaluation
\cite{vardi_comp}.  Our final result is a matching lower bound,
already for data complexity alone.  The proof is in the Appendix.

\begin{theorem} \label{hardness}
There exists a fixed schema $\scm$ and a fixed expression $e$ in
$\ML + \eigen$,
well-typed over $\scm$, such that the following problem is hard
for $\exists \mathbf R$: Given an integer
instance $I$ over $\scm$, decide
whether the zero matrix is a possible result of $e(I)$.
The pointwise applications in $e$ use only simple functions
definable by quantifier-free formulas over the reals.
\end{theorem}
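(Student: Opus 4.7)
The plan is to give a polynomial-time reduction to the stated problem from an $\exists \mathbf{R}$-complete one, namely the feasibility over $\mathbf{R}$ of a system of equations each of the form $x_i + x_j = x_k$, $x_i \cdot x_j = x_k$, or $x_i = 1$; this restricted fragment of the existential theory of the reals is well-known to be $\exists \mathbf{R}$-complete. The central tool will be $\eigen$ applied to a diagonal integer matrix with simple distinct eigenvalues: for $\tilde D = \diag(1, 2, \ldots, n)$, each eigenspace of $\tilde D$ is one-dimensional, so the ``orthogonality of eigenvectors corresponding to a same eigenvalue'' clause in the definition of $\eigen$ is vacuous, and \emph{every} diagonal matrix with nonzero entries is a possible output of $\eigen(\tilde D)$ (modulo a permutation of the columns, which is also left free). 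Multiplying such an output $B$ by $\one(\tilde D)$ row-sums it into a column vector of $n$ independent, completely arbitrary nonzero complex ``guesses''.

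The fixed schema $\scm$ will have one square variable $\tilde D$ of type $\alpha \times \alpha$, three rectangular variables $U$, $V$, $W$ of type $\beta \times \alpha$, and three column variables $T_+$, $T_\times$, $T_=$ of type $\beta \times 1$. Given an instance of the feasibility problem above with $n$ variables and $m$ equations, the reduction produces the integer instance $I$ where $\tilde D = \diag(1, 2, \ldots, n)$; the $T_\bullet$ are $\{0,1\}$-valued indicator vectors marking which equations are of the corresponding type; and row $\ell$ of $U$ (respectively $V$, $W$) is the standard basis vector in $\{0,1\}^n$ selecting $x_{i_\ell}$ (respectively $x_{j_\ell}$, $x_{k_\ell}$). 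All entries are integers and the reduction runs in polynomial time.

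The fixed expression $e$ evaluates $\eigen(\tilde D)$ twice in separate \textsf{let}-bindings to obtain two independent guess matrices $B_1$, $B_2$; it extracts $c^{(1)} = B_1 \cdot \one(\tilde D)$ and $c^{(2)} = B_2 \cdot \one(\tilde D)$ and sets $x = \Apply[-](c^{(1)}, c^{(2)})$. Since every real number is the difference of two nonzero real numbers, the pair of guesses jointly ranges over all of $\mathbf{R}^n$ once $x$ is additionally constrained to be real. The expression then computes $u = U \cdot x$, $v = V \cdot x$, $w = W \cdot x$, and the residual vector
\[ r \;=\; T_+ \odot (u + v - w) \;+\; T_\times \odot (u \odot v - w) \;+\; T_= \odot (u - \one(u)), \]
where $\odot$ is the Hadamard product and all of $+$, $-$, $\odot$ are realized by $\Apply$. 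Finally, $e$ returns the $1 \times 1$ scalar $s = \one(r)^{*} \cdot \Apply[\phi](r) + \one(x)^{*} \cdot \Apply[\psi](x)$ with $\phi(z) = |z|^2$ and $\psi(z) = (\Im z)^2$, the outer addition performed by one more $\Apply[+]$.

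Correctness then follows because $s = 0$ iff $r = 0$ and every $x_i$ is real, i.e., iff the sampled $x$ is a real feasible assignment for the original instance; so $0 \in e(I)$ iff that instance is satisfiable, giving the desired hardness. The pointwise functions used are $+$, $-$, $\times$, $\phi$, $\psi$, and the constant $1$, all quantifier-free definable over the reals. The main delicacy will be justifying the italicized claim in the first paragraph---that $\eigen(\tilde D)$ really admits every nonzero-diagonal matrix as a possible output and that separate \textsf{let}-invocations of $\eigen(\tilde D)$ are nondeterministically independent---after which the remainder is routine bookkeeping about the encoding and evaluation of the constraints in matrix form.
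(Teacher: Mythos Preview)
Your argument is correct and gives a genuinely different proof from the paper's.

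The paper reduces from the feasibility problem for a single integer polynomial $p=0$. It invokes Valiant's construction to turn $p$ into a weighted graph whose adjacency matrix has determinant $p$; the variable entries of that matrix are then ``guessed'' by applying $\eigen$ once to the $k\times k$ zero matrix (whose eigenspace is all of $\C^k$, so any nonzero first column is possible), taking the real part, and finally testing singularity via $\inv$. Thus the paper leans on two nontrivial pieces of machinery---Valiant's determinant encoding and the expressibility of $\inv$ in $\ML+\eigen$---and uses a single $\eigen$ call.

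Your route is more elementary on both counts. By reducing from the straight-line form of the existential theory (equations $x_i+x_j=x_k$, $x_ix_j=x_k$, $x_i=1$) you can check satisfaction directly as a sum of squared residuals, sidestepping Valiant entirely and never needing $\inv$. Your guessing mechanism is also different: applying $\eigen$ to $\diag(1,\dots,n)$ exploits that the eigenvalues are simple, so the orthogonality clause is vacuous and any generalized permutation matrix is a legal output; row-summing yields an arbitrary vector in $(\C\setminus\{0\})^n$, and subtracting two independent guesses (relying on the independence of separate \textsf{let}-bound $\eigen$ calls, which is indeed how the big-step nondeterministic semantics works) covers all of $\C^n$. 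The $(\Im z)^2$ penalty then restricts to $\mathbf{R}^n$. This costs an extra $\eigen$ call but keeps the whole construction self-contained and avoids the black-box citations. One cosmetic remark: the paper reserves $\odot$ for scalar multiplication (Example~\ref{exscalmul}), so in a final write-up you should spell your Hadamard products as $\Apply[\times](\cdot,\cdot)$, as you already note.
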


\begin{remark}[Complexity of deterministic expressions]
Our proof of
Theorem~\ref{hardness} relies on the nondeterminism of the
$\eigen$ operation.
Coming back to our remark on determinacy at the end of the
previous section, it is an interesting question for further
research to understand not only the expressive power but also
the complexity of the evaluation problem
for \emph{deterministic} $\ML + \eigen$ expressions.
\end{remark}

\section{Conclusion}

There is a commendable trend in contemporary database research to
leverage, and considerably extend, techniques from database query processing
and optimization, to support large-scale linear algebra
computations.  In principle, data scientists could then work directly
in SQL or related languages.
Still, some users will prefer to continue using the
matrix sublanguages they are more familiar with.  Supporting
these languages is also important so that existing code need not
be rewritten.

From the perspective of database theory,
it then becomes
relevant to understand the expressive power of these languages as
well as possible.  In this paper we have
proposed a framework for viewing matrix
manipulation from the point of view of expressive power of database query
languages.  Moreover, our results formally confirm that the basic
set of matrix operations offered by systems in practice,
formalized here in the language $\ML + \inv + \eigen$, really is
adequate for expressing a range of linear algebra techniques
and procedures.

In the paper we have already mentioned some intriguing questions
for further research.  Deep inexpressibility results have
been developed for logics with rank operators \cite{pakusa_phd}.
Although these results are mainly concerned with finite fields,
they might still provide valuable insight in our open questions.  Also,
we have not covered all standard constructs from linear algebra.
For instance, it may be worthwhile to extend our framework
with the operation of putting matrices in upper triangular form,
with the Gram-Schmidt procedure (which is now partly hidden in
the $\eigen$ operation),
and with the singular value decomposition.

Finally, we note that various authors have proposed to go beyond
matrices, introducing data models and algebra for tensors or
multidimensional arrays \cite{rusu_survey,kim_tensordb,sato_tensors}.  When
moving to more and more powerful and complicated languages,
however, it becomes less clear at what point we should simply
move all the way to full SQL, or extensions of SQL with recursion.

\section*{Acknowledgment}

We thank Bart Kuijpers for telling us about the complexity class
$\exists \mathbf R$.  We thank Lauri Hella and Wied Pakusa for
helpful discussions, and Christoph Berkholz and Anuj Dawar
for their help with the proof of Proposition~\ref{4clique}.

\bibliographystyle{plainurl}
\bibliography{database}

\appendix

\section*{Appendix}

\begin{proof}[Proof of Theorem~\ref{fo3}]
\newcommand{\all}{\mathit{all}}
It is known \cite{tarskigivant,marxvenema_multi}
that $\rm FO^3$ graph queries
can be expressed in the algebra
of binary relations with the operations $\all$, identity, union, set
difference, converse, and relational composition.
These operations are well
known, except perhaps for $\all$, which, on a graph instance $I$,
evaluates to the cartesian product of the active domain of $I$
with itself.
Identity evaluates to the identity relation on the active domain
of $I$.  Each of
these operations is easy to express in $\ML$.  For $\all$ we use
$\one(R) \cdot \one(R)^*$, where for $R$ we can take any relation variable
from the schema.  Identity is expressed as $\diag(\one(R))$.
Union $r \cup s$ is expressed as
$\Apply[x \lor y](r,s)$, and set difference $r - s$ as $\Apply[x
\land \neg y](r,s)$.  Converse is transpose.  Relational
composition $r \circ s$ is expressed as $\Apply[x>0](r \cdot s)$,
where ${x>0} = 1$ if $x$ is positive and $0$ otherwise.
\end{proof}

\paragraph*{The
relational calculus with aggregates.}  In this logic, we have
base variables and numerical variables.  Base variables can be
bound to base columns of relations, and compared for equality.
Numerical variables can be bound to numerical columns, and can
be equated to function applications and aggregations.  We will
not recall the syntax formally \cite{libkin_sql}.
The advantage of the relational calculus is that variables,
especially base variables, can be repeated and reused.  For example,
matrix multiplication $M \cdot N$ with $M$ of type $\alpha \times
\beta$ and $N$ of type $\beta \times \gamma$ can be expressed by
the formula $$ \varphi(i,j,z) \equiv z = \Sum k,x,y . (M(i,k,x)
\land N(k,j,y), x \times y). $$  Here, $i$, $j$ and $k$ are base
variables and $x$, $y$ and $z$ are numerical variables.  Only two
base variables, $i$ and $j$, are free; in the subformula
$M(i,k,x)$ only $i$ and $k$ are free, and in $N(k,j,y)$ only $k$
and $j$ are free.  So, if $M$ or $N$ had been a subexpression
involving matrix multiplication in turn, we could have reused one
of the three variables.  The other operations of $\ML$ need
only two base variables.  We conclude:

\begin{proposition} \label{calc3}
Let $\scm$, $e$, $\tau$ and $\ell$ as in
Theorem~\ref{sumtheorem}.
For every $\ML$ expression $e$ there is a formula
$\varphi$ over $\Rel(\scm)$
in the relational calculus with summation, such that
\begin{enumerate}
\item
If\/ $\tau$ is general, $\varphi(i,j,z)$ has two free base
variables $i$ and $j$ and one free numerical variable $z$; if\/
$\tau$ is a vector type, we have $\varphi(i,z)$; and if\/ $\tau$ is
scalar, we have $\varphi(z)$.
\item
For every instance $I$, the relation defined by $\varphi$ on
$\Rel(I)$ equals $\Rel_\ell(e(I))$.
\item
The formula $\varphi$ uses only three distinct base variables.
\end{enumerate}
\end{proposition}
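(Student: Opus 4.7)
The plan is structural induction on the $\ML$ expression $e$. I would fix three base variables $v_1, v_2, v_3$ once and for all, and maintain the invariant that the formula $\varphi_e$ produced by the induction uses only these three names as base variables (free or bound), with the number of free base variables matching the output type: two if $e$ is general, one if $e$ has a vector type, none if $e$ is scalar. A single numerical free variable $z$ always represents the entry value, exactly as in the formula for matrix multiplication already displayed in the discussion preceding the proposition.

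The technical enabler is a renaming lemma: given a formula $\varphi$ built over $\{v_1, v_2, v_3\}$ and a permutation $\pi$ of these names, one can produce an equivalent formula $\pi(\varphi)$ still over $\{v_1, v_2, v_3\}$ in which the roles of the three variables have been consistently permuted. This follows by standard alpha-renaming: bound occurrences are first relabeled to temporary fresh names to avoid capture, $\pi$ is applied to the remainder, and the temporary names are then recycled back into the three-name pool, which is always possible because at each quantifier site only one name is under binding and two names are available. Equipped with this lemma, most inductive cases are light. A matrix variable becomes an atom $M(v_1, v_2, z)$ or its vector/scalar variant. Transpose swaps $v_1$ and $v_2$ via the renaming lemma. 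For $\one(e)$ I take $z = 1 \land \exists v_2 \exists x \, \varphi_e(v_1, v_2, x)$, freeing $v_2$ for reuse. For $\diag(e)$, I use $(v_1 = v_2 \land \varphi_e(v_1, z)) \lor (v_1 \neq v_2 \land z = 0)$. For $\Apply[f]$, I conjoin the subformulas over their common free base variables and equate $z$ with $f$ applied to quantified numerical variables. The let-construct reduces to substitution.

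The main obstacle is matrix multiplication $e_1 \cdot e_2$, which is the very reason the three-variable bound is tight. By induction, $\varphi_{e_1}$ and $\varphi_{e_2}$ each use only $\{v_1, v_2, v_3\}$ and, in the generic case, each has two free base variables. Applying the renaming lemma, I align the free base variables of $\varphi_{e_1}$ to $v_1, v_3$ and those of $\varphi_{e_2}$ to $v_3, v_2$, and then set
\[
 \varphi(v_1, v_2, z) \equiv z = \Sum v_3, x, y \,.\, \bigl(\varphi_{e_1} \land \varphi_{e_2},\, x \times y\bigr).
\]
The subtle point to verify is that $\varphi_{e_1}$ may contain bound occurrences of $v_2$, and $\varphi_{e_2}$ bound occurrences of $v_1$, which would clash with the new free variables once the two formulas are conjoined. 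The renaming lemma defuses this, because at any bound subformula of $\varphi_{e_i}$ only the two free base variables of that subformula are committed, leaving the third name available to host the bound occurrence without capture. Degenerate subcases, where one or both of $e_1, e_2$ has vector or scalar output type, involve fewer free variables and are strictly easier; and the fact that the same three names suffice after repeated nesting follows immediately from the invariant being preserved at each inductive step.
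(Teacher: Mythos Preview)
Your proposal is correct and follows exactly the paper's approach: the paper's argument is just the paragraph preceding the proposition, which handles matrix multiplication as the critical three-variable case (observing that each subformula has only two free base variables, so the third name can be recycled) and dismisses the remaining operations as needing only two base variables. Your write-up is simply more detailed than the paper's sketch; two small fixes are that your $\diag$ formula needs a guard such as $\exists x'\,\varphi_e(v_2,x')$ so that $v_2$ stays within the index range of the vector, and transpose must also conjugate the numerical value, but neither affects the base-variable count that the proposition is about.
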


To prove Proposition~\ref{4clique} we state a lemma, which
refines Proposition~\ref{calc3} in the setting of graph queries.

\begin{lemma} \label{lemmapsi}
If a graph query $q$ is expressible in $\ML$, then $q$ is expressible
by a formula $\psi(i,j)$ in the relational calculus with
summation, where $i$ and $j$ are base
variables, and $\psi$ uses at most three distinct base
variables.
\end{lemma}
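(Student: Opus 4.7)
The plan is to leverage Proposition~\ref{calc3}, which already yields a three-variable formula expressing $e(\Mat(I))$, and then translate from the matrix-style ternary representation to the binary graph representation.

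First I would apply Proposition~\ref{calc3} to the $\ML$ expression $e$ that witnesses the graph query $q$; here $e$ is well-typed over $\Mat(\scm)$ with output type $\alpha \times \alpha$. This produces a formula $\varphi(i,j,z)$ in the relational calculus with summation, over the schema $\Rel(\Mat(\scm))$, using at most three distinct base variables, such that $\varphi$ defines on $\Rel(\Mat(I))$ precisely the relation $\Rel_2(e(\Mat(I)))$. Since $e(\Mat(I))=\Adj_I(q(I))$ has only $0/1$ entries, $(i,j)\in q(I)$ if and only if $\varphi(i,j,1)$ holds on $\Rel(\Mat(I))$.

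Second I would rewrite $\varphi$, which refers to the ternary expansion $\Rel(\Mat(I))$, into an equivalent formula $\varphi'$ that refers to the graph instance $I$ directly. For each $R\in\var(\scm)$, the ternary relation $\Rel(\Mat(I))(R)$ consists of the triples $(a,b,1)$ with $(a,b)\in I(R)$, together with the triples $(a,b,0)$ for pairs $(a,b)$ of active-domain elements with $(a,b)\notin I(R)$. I would therefore replace each atomic occurrence $R(a,b,c)$ appearing inside $\varphi$ by
$$\mathsf{Dom}(a)\land\mathsf{Dom}(b)\land\bigl((R(a,b)\land c=1)\lor(\neg R(a,b)\land c=0)\bigr),$$
where $\mathsf{Dom}(x)\equiv\exists y\bigvee_{R'\in\var(\scm)}\bigl(R'(x,y)\lor R'(y,x)\bigr)$ is the active-domain formula (over the graph schema). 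The desired formula is then $\psi(i,j)\equiv\exists z\,(\varphi'(i,j,z)\land z=1)$; the numerical variable $z$ is not a base variable and so does not count toward the three-variable budget.

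The main obstacle is keeping the number of distinct base-variable symbols at three after this substitution, because $\mathsf{Dom}$ introduces a fresh bound variable $y$. The point is that each atom $R(a,b,c)$ appearing in $\varphi$ uses at most two of the three base variables available to $\varphi$, so at least one of them is absent from the atom and can be reused by alpha-renaming as the bound variable inside the local $\mathsf{Dom}(a)$ and $\mathsf{Dom}(b)$ (the inner binder harmlessly shadows any outer occurrence of the same name). Variable reuse across disjoint scopes does not affect the count of distinct variable names, so the three-variable property is preserved throughout the rewriting, and $\psi(i,j)$ ends up using only the three base variables already present in $\varphi$.
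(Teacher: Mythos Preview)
Your proof is correct and follows essentially the same route as the paper: apply Proposition~\ref{calc3}, rewrite each ternary atom $R(a,b,c)$ as $(R(a,b)\land c=1)\lor(\neg R(a,b)\land c=0)$, and take $\psi\equiv\exists z\,(z=1\land\varphi')$. The paper omits your $\mathsf{Dom}(a)\land\mathsf{Dom}(b)$ conjuncts because under the active-domain semantics in force base variables already range over $\{1,\dots,n\}$ (which coincides for $I$ and $\Rel(\Mat(I))$), so your variable-reuse argument, while correct, is not needed.
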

\begin{proof}
Let $e$ be a $\ML$ expression that expresses $q$.  Let
$\varphi(i,j,z)$
be the formula given by Proposition~\ref{calc3}.
Let $\varphi'(i,j,z)$ be the formula obtained from $\varphi$ as follows.
We replace each atomic formula of the form $R(i',j',x)$, where
$i'$ and $j'$ are base variables and $x$ is a numerical variable,
by $((x=1 \land R(i',j')) \lor (x=0 \land \neg R(i',j'))$.  Now
$\psi$ can be obtained as $\exists z(z=1 \land \varphi')$.
\end{proof}

We can now give the

\begin{proof}[Proof of Proposition~\ref{4clique}]
Let $e$ be a $\ML$ expression expressing some graph query $q$.
Let $\psi$ be the formula given by
Lemma~\ref{lemmapsi}.  It is known \cite{hlnw_aggregate,libkin_sql} that
every formula in the relational calculus with
aggregates can be equivalently expressed by a formula
in infinitary logic with counting, where the only variables in the latter
formula are the base variables in the original formula.  Hence,
$q$ is expressible in $C^3$, infinitary counting logic with three
distinct variables.

The four-clique query, however, is not expressible in $C^3$.  In
proof, consider the four-clique graph $G$, to which
we apply the Cai-F\"urer-Immerman construction
\cite{cfi,otto_bounded}, yielding graphs $G^0$ and $G^1$ which
are indistinguishable in $C^3$.  This construction is such that
$G^0$ contains a ``four-clique formed by paths of length three'':
four nodes such that there is a path of length three between any
two of them.  The graph $G^1$, however, does not contain four
such nodes.

Now suppose, for the sake of
contradiction, that there would be a sentence $\varphi$ in $C^3$
expressing the existence of a four-clique.  We can replace each
atomic formula $R(x,y)$ by $\exists z(R(x,z) \land \exists
x(R(z,x) \land R(x,y)))$.  The resulting $C^3$ sentence looks
for a four-clique formed by paths of length three, and would
distinguish $G^0$ from $G^1$, which yields our contradiction.
\end{proof}

\begin{proof}[Proof of Theorem~\ref{inv2eigen}]
We describe a fixed procedure for determining $A^{-1}$,
for any square matrix $A$.  Let $S = A^*A$.  Then $A$ is
invertible if and only if $S$ is.  Let us assume first that $S$
is indeed invertible.

Since $S$ is self-adjoint, $\C^n$ has an orthogonal basis
consisting of eigenvectors of $S$.  Eigenvectors of a
self-adjoint operator that correspond to distinct eigenvalues are
always orthogonal.  Hence, $\eigen(S)$ always returns an
orthogonal basis of $\C^n$ consisting of eigenvectors of $S$.
Let $(B,\Lambda) \in \eigen(S)$ (using the shorthand introduced
before Example~\ref{exrank}).  We can normalize the columns of
$B$ in $\ML$ as $$ \Apply[x/\sqrt y](B,\one(B) \cdot (B^*\cdot B
\cdot \one(B))^*).$$  (This expression works because the columns
in $B$ are mutually orthogonal.)
So, we may now assume that $B$ contains an
orthonormal basis consisting of eigenvectors of $S$.  In
particular, $B^{-1}=B^*$, and $S = B \Lambda B^*$.

\newcommand{\LL}{\sqrt{\Lambda}} Since we have assumed $S$ to be
invertible, none of the eigenvalues is zero.  We can invert
$\Lambda$ simply by replacing each entry on the diagonal by its
reciprocal.  Thus, $\Lambda^{-1}$ can be computed from $\Lambda$
by pointwise application.

Now $A^{-1}$ can be computed by the expression $C=B \Lambda^{-1} B^* A^*$.
To see that $C$ indeed equals $A^{-1}$, we calculate $CA =
B \Lambda^{-1} B^* A^* A = B \Lambda^{-1} B^* S = B \Lambda^{-1}
B^* B \Lambda B^*$ which simplifies to the identity matrix.

When $S$ is not invertible, we should return the zero matrix.
In $\ML$ we can compute the  matrix $Z$ that is zero if one of the
eigenvalues is zero, and the identity matrix otherwise.  We then
multiply the final expression with $Z$.
A final detail is to make the computation
well-defined in all cases.  Thereto, in the pointwise
applications of $x/\sqrt y$ and the reciprocal, we extend these
functions arbitrarily to total functions.
\end{proof}

\begin{proof}[Proof of Theorem~\ref{hardness}]
The feasibility problem \cite{schaefer_nash}
takes as input an equation $p=0$, with $p$ a
multivariate polynomial with integer
coefficients, and asks whether the equation has a solution over
the reals.  We may assume that $p$ is given in ``standard form'', as a sum of
terms of the form $a\mu$ where $a$ is an integer and $\mu$ is a
monomial \cite{matousek_existsr}.  The feasibility problem is
known to be complete for $\exists \mathbf R$. We will design a
schema $\scm$ and an expression $e$ so that the feasibility
problem reduces in polynomial time to our problem.

We use a construction by Valiant \cite{valiant_algebra}.  This
construction converts any $p$ as above, in polynomial time, to a
directed, edge-weighted graph $G$.  The fundamental property of
Valiant's construction is that the determinant of the adjacency
matrix $A$ of $G$ equals $p$.  The edge weights in $G$
are coefficients or variables from $p$, or the value 1.  The
entries in $A$ are zero or edge weights from $G$.  We now observe
that the construction has a specific property: when $p$ is given
in standard form, with an explicit coefficient before each
monomial (even if it is merely the value 1), each row of $A$
contains at most one variable.  This property is important for
the expression $e$, specified below, to work.

\newcommand{\Coef}{\mathit{Coef}}
\newcommand{\Vars}{\mathit{Vars}}
\newcommand{\Enc}{\mathit{Enc}}
Assume $G$ has nodes $1$, \dots, $n$, and let the variables
in $p$ be $x_1$, \dots, $x_k$.  We represent $A$ by three integer
matrices $\Coef$, $\Vars$, and $\Enc$.  Matrix $\Coef$ is the $n \times n$
matrix obtained from $A$ by omitting the variable entries (these
are set to zero).  On the other hand, $\Vars$, also $n \times n$, is
obtained from $A$ by keeping only the variable entries, but
setting them to 1.  All other entries are set to zero.  Finally,
$\Enc$ encodes which variables are represented by the one-entries in
$\Vars$.  Specifically, $\Enc$ is the $n \times k$ matrix where
$E_{i,j}=1$ if the $i$th row of $A$ contains variable $x_j$, and
zero otherwise.

We thus reduce an input $p=0$ of the feasibility problem to the
instance $I$ consisting of the matrices $\Coef$, $\Vars$, $\Enc$.
Additionally, for technical reasons, $I$ also has the $k \times
1$ column vector $F$, which has value 1 in its first entry and is
zero everywhere else.  Formally, this instance is over the fixed
schema $\scm$ consisting of the matrix variables $M_{\Coef}$,
$M_{\Vars}$, $M_{\Enc}$, and $M_F$, where the first two variables
have type $\alpha \times \alpha$; the third variable has type
$\alpha \times \beta$; and $M_F$ has type $\beta \times 1$.  To
reduce clutter, however, in what follows we will write these
variables simply as $\Coef$, $\Vars$, $\Enc$ and $F$.

We must now give a expression $e$ that has the zero matrix
as possible result of $e(I)$ if and only if $p=0$ has a solution over the
reals.  For any $k \times 1$ vector $v$ of real numbers, let
$A^{(v)}$ denote the matrix $A$ where we have substituted the
entries of $v$ for the variables $x_1$, \dots, $x_k$.  By
Valiant's construction, the
expression $e$ should return the zero matrix as a possible
result, if and only if there exists $v$ such that $A^{(v)}$ has 
determinant zero, i.e., is not invertible.

The desired expression $e$ works as follows.
By applying $\eigen$ to the $k\times k$ zero matrix $O$,
and selecting the first column, we can nondeterministically
obtain all possible nonzero $k \times 1$ column vectors.
Taking only the real part ($\Re$) of the entries,
we obtain all possible real column vectors $v$.
Then the matrix $A^{(v)}$ is assembled (in matrix variable $AA$)
using the matrices
$\Coef$, $\Vars$ and $\Enc$.  Finally, we apply $\inv$ to $AA$ so that
the zero matrix is returned if and only if $AA$ has determinant zero.

In conclusion, expression $e$ reads as follows:
\begin{tabbing}
\sffamily let $O = \Apply[0](F \cdot F^*)$ in \\
\sffamily let $B = \eigen(O)$ in \\
\sffamily let $v = \Apply[\Re](B \cdot F)$ in \\
\sffamily let $AA = \Apply[+](\Coef,\Apply[g](\Vars,\Enc\cdot v
\cdot \one(\Coef)^*))$ in \\
$\inv(AA)$
\end{tabbing}
Here, in the last expression, $g(x,y)=y$ if $x=1$, and zero
otherwise.
\end{proof}

\end{document}